\documentclass[twocolumn,conference,letterpaper]{IEEEtran}
\usepackage[left=0.75in,right=0.75in,top=1in,bottom=0.75in]{geometry}
\usepackage{amsmath}          
\usepackage{amssymb}          
\usepackage{amsthm}           
\usepackage{enumitem}         
\usepackage{bm}               

\newtheorem{assumption}{Assumption}
\newtheorem{remark}{Remark}
\newtheorem{problem}{Problem}
\newtheorem{lemma}{Lemma}
\newtheorem{theorem}{Theorem}

\theoremstyle{definition}
\newtheorem{definition}{Definition}
\usepackage{color}
\newcommand{\Ujoint}{\mathcal{U}_{\mathrm{joint}}}
\newcommand{\Ubox}{\mathcal{B}}
\newcommand{\bu}{\mathbf{u}}
\newcommand{\buc}{\mathbf{u}_c}
\newcommand{\bG}{\mathbf{G}}
\newcommand{\bn}{\mathbf{n}}
\newcommand{\bP}{\mathbf{P}}
\newcommand{\bI}{\mathbf{I}}
\newcommand{\bg}{\mathbf{g}}
\newcommand{\bq}{\mathbf{q}}

\newcommand{\diag}{\operatorname{diag}}
\newcommand{\Sbox}{S_{\mathrm{box}}}

\usepackage{color}
\usepackage{graphicx}
\usepackage{float,subcaption}
\newcommand{\tb}[1]{\textcolor{blue}{#1}}

\usepackage{hyperref}
\hypersetup{colorlinks, breaklinks, citecolor=blue, linkcolor=blue, urlcolor=blue}
\usepackage[nameinlink]{cleveref}
\Crefformat{figure}{#2Fig.~#1#3}
\Crefmultiformat{figure}{Figs.~#2#1#3}{ and~#2#1#3}{, #2#1#3}{ and~#2#1#3}
\Crefname{assumption}{Assumption}{Assumptions}
\usepackage{tikz}
\usetikzlibrary{arrows.meta,calc,positioning}
\IEEEoverridecommandlockouts

\begin{document}

\title{Admissibility-Preserving Control for Multi-Input Systems with Joint Capacity Constraints}

\author{Saurabh Kumar, Lohitvel Gopikannan, Shashi Ranjan Kumar, and Abhinav Sinha 
\thanks{S.Kumar, L. Gopikannan and S. R. Kumar are with the Intelligent Systems \& Control (ISaC) Lab, Department of Aerospace Engineering, Indian Institute of Technology Bombay, Mumbai 400076, India. (e-mails: saurabh.k@aero.iitb.ac.in, lohitvel@aero.iitb.ac.in, srk@aero.iitb.ac.in). A. Sinha is with the Guidance, Autonomy, Learning, and Control for Intelligent Systems (GALACxIS) Lab, Department of Aerospace Engineering, University of Cincinnati, OH 45221, USA. (e-mail: abhinav.sinha@uc.edu)}
}

\maketitle
	\thispagestyle{empty}
\begin{abstract}
This paper addresses the control of multi-input strict-feedback nonlinear systems subject to a \emph{joint capacity constraint}, in which the admissible input set is a coupled subset of the individual actuator limits. Unlike existing constraint-handling methods that enforce actuator bounds channel by channel and may unnecessarily suppress admissible control directions, we develop an Anisotropic Joint-Admissibility-Preserving Input Realization (AJ-APIR) framework that explicitly exploits the geometry of the joint constraint. The proposed realization constructs a state-dependent gain matrix whose spectral decomposition separates the commanded input into normal and tangential directions relative to the constraint boundary. The normal component is attenuated as the boundary is approached, while the tangential component is preserved, which allows the admissible control effort to be redistributed without loss of tracking authority. Integrated with a backstepping controller, the AJ-APIR framework guarantees forward invariance of the joint admissible set for all time. We establish exponential convergence of the tracking error to zero together with uniform boundedness of all closed-loop signals, and characterize the resulting command-demand behavior under the joint constraint. Simulation results for a representative second-order, two-input nonlinear system subject to a power-budget constraint demonstrate the efficacy of the proposed method to enforce the joint input constraint.
\end{abstract}
\begin{IEEEkeywords}
 Safety-critical control, admissibility-preserving control, control system synthesis, joint capacity constraint, nonlinear control, multi-input systems.
\end{IEEEkeywords}
\section{Introduction}
In many practical multi-actuator systems, actuator channels share a common physical resource, such as the electrical bus shared by control surfaces and propulsion in fixed-wing UAVs, the thermal path shared by joint motors in redundantly actuated manipulators, or the prime mover shared by azimuth thrusters in marine surface vessels. This physical coupling gives rise to a joint constraint on the collective input that independent per-channel bounds fail to capture. In practice, the aggregate demand may breach the shared operating limit even when every individual channel remains within its own admissible range. Consequently, to maintain the collective input within the jointly admissible set at all times while ensuring that the system output tracks a prescribed reference trajectory remains an open challenge concerning safety-critical control.

The problem of designing feedback controllers for nonlinear systems subject to actuator constraints has attracted sustained research attention for more than three decades.  Nevertheless, the systematic treatment of coupled actuator constraints, particularly without unnecessarily sacrificing admissible control authority, remains an open problem. A foundational body of work addresses this problem through anti-windup augmentation of pre-designed linear controllers, in which a secondary compensator is introduced to recover closed-loop performance when the primary loop saturates \cite{teel1999anti,zaccarian2002common,grimm2003antiwindup}. Although systematic and amenable to convex synthesis, anti-windup methods are generally developed for linear nominal plants and do not readily accommodate the state-dependent gain structure of nonlinear strict-feedback systems. A parallel stream exploits the recursive structure of strict-feedback dynamics through backstepping \cite{krstic1995nonlinear}, with constraint handling achieved by confining each virtual control to a bounded feasible set. The authors in \cite{wen2011robust} showed that augmenting a standard backstepping design with auxiliary dynamics whose equilibrium tracks the saturated channel yields a complete Lyapunov analysis for tracking under asymmetric saturation, and the construction was subsequently extended to uncertain multi-input systems in \cite{chen2011adaptive}. It is worth noting that these treatments bound each channel separately, so that the feasible set they enforce is, by construction, a Cartesian product.

Barrier Lyapunov functions (BLFs) provide an alternative mechanism for enforcing constraints in strict-feedback systems by introducing a potential that grows without bound as a constrained variable approaches its prescribed limit, thereby preventing the trajectory from reaching the boundary. The authors in \cite{tee2009barrier} established asymptotic tracking for single-output systems subject to output constraints using BLFs. In \cite{ngo2005integrator}, this framework was discussed for full-state constraints, while \cite{liu2016barrier} combined BLFs with adaptive neural networks to address uncertain strict-feedback systems. However, standard BLF constructions assign an independent barrier potential to each constrained channel, resulting in a Lyapunov function composed of decoupled terms. Consequently, they do not directly capture the inter-channel coupling induced by a joint capacity constraint.

Control Barrier Functions (CBFs) \cite{ames2016control} provide a unified framework for barrier-certificate and Lyapunov-based stability analysis, with a quadratic program (QP)-based safety filter applicable to a broad class of nonlinear systems. The safety-critical control framework in \cite{ames2019control} accommodates general smooth constraint sets, including joint constraints, by incorporating barrier-function gradients into a pointwise QP. The authors in \cite{rauscher2016constrained} demonstrated the compatibility of such constraint-handling mechanisms with backstepping for relative-degree-one input constraints. Nevertheless, the QP formulation modifies the nominal control according to the Euclidean minimum-norm criterion, which is effectively direction-blind with respect to the geometry of the joint constraint. Near a joint constraint boundary, this may unnecessarily attenuate components of the commanded input that are tangent to the boundary and therefore do not contribute to constraint violation. Thus, constraint enforcement need not be isotropic, yet this geometric structure does not appear to be explicitly exploited in the existing CBF-based formulations.

A further class of methods relies on control allocation to map a scalar or low-dimensional virtual control into a higher-dimensional actuator command subject to box and joint constraints, typically by solving a weighted least-squares problem or quadratic program at each instant \cite{durham1993constrained,bodson2002evaluation,petersen2006constrained,johansen2013control}. The relationship between control allocation and optimal control was discussed in \cite{harkegaard2005resolving}, while a dynamic counterpart was developed in \cite{zaccarian2009dynamic}. Control allocation decouples the outer-loop control design from the constraint-satisfaction problem, making it attractive for redundant platforms. However, the feasibility of the allocation problem depends on the actuator authority available at each instant and is therefore not guaranteed during large-signal transients. Moreover, the stability of the interconnected outer-loop and allocator dynamics generally requires a separate analysis. Governor and predictive architectures \cite{garone2017reference,rawlings2020model}, in contrast, address feasibility at the planning or reference-generation level by reshaping the reference or optimizing over a finite horizon, respectively, but at a potentially high computational cost for fast actuation loops.

The Admissibility-Preserving Input Realization (APIR) framework, introduced for strict-feedback systems with asymmetric single-channel actuator constraints \cite{kumar2026admissibility}, adopts a dynamical-systems perspective on constraint satisfaction. Rather than modifying a nominal controller through a pointwise optimization or treating constraint enforcement as a separate allocation layer, APIR embeds admissibility directly into the physical input dynamics through a differential realization whose gain vanishes at the constraint boundary. A distributed extension to networked systems with directed communication graphs was subsequently developed in \cite{sinha2026networked}. These works establish an intrinsic alternative to optimization-based safety filters \cite{ames2016control,ames2019control} and allocation-based constraint handling \cite{johansen2013control}, in which admissibility is enforced through the dynamics of the realized input itself.

However, the existing APIR construction \cite{kumar2026admissibility} is fundamentally channel-wise, that is, its realization is formed from independent scalar admissibility factors and therefore applies to admissible sets that are Cartesian products of individual actuator ranges. This indicates that it may not directly address a coupled input envelope whose boundary depends on the collective actuator state. This may appear as a limitation, which is nonetheless shared by standard channel-wise saturation and backstepping- based controllers \cite{wen2011robust,chen2011adaptive} and by BLF formulations that impose separate barrier terms on individual constrained variables \cite{tee2009barrier,ngo2005integrator,liu2016barrier}. CBF-based formulations can represent general smooth joint constraint sets \cite{ames2016control,ames2019control}. However, their standard optimization-based realization modifies the nominal command through a pointwise safety filter rather than embedding the joint admissibility mechanism directly into the input dynamics. Likewise, control-allocation methods explicitly account for coupled actuator limitations, but do so by introducing a separate allocation layer between the outer-loop controller and the physical actuators \cite{johansen2013control}. It must be noted that this scheme may be effective for distributing control effort among constrained actuators, but the architecture separates constraint satisfaction from the closed-loop analysis. However, the feasibility must be maintained by the allocation problem, and the effect of the allocation on the stability and tracking properties of the overall closed-loop system warrants a separate analysis. To the best of our knowledge, the existing literature does not provide a dynamic realization that simultaneously enforces a non-Cartesian joint input envelope and preserves admissible control authority in directions tangent to its boundary.

The current work extends the notion of the admissibility-preserving control framework \cite{kumar2026admissibility} from individual actuator constraints to coupled multi-input constraints. We replace the channel-wise scalar realization with a matrix-valued, geometry-aware realization that differentiates between control directions that may risk joint admissibility and those that remain feasible. In particular, the proposed AJ-APIR selectively suppresses the component normal to the joint constraint boundary as it is approached, while preserving the tangential component. Thus, the proposed realization enforces the shared input envelope inherently while avoiding the unnecessary attenuation of admissible control redistribution that arises from direction-independent constraint handling. Our main contributions are summarized as follows:
\begin{itemize}
\item We propose an anisotropic joint-admissibility-preserving input realization (AJ-APIR) for multi-input strict-feedback nonlinear systems subject to coupled joint capacity constraints. This is unlike the channel-wise methods \cite{wen2011robust,chen2011adaptive}, which cannot express a coupled set.
\item The proposed realization employs a matrix-valued gain that splits spectrally into normal and tangential components relative to the joint constraint boundary. This allows the normal component of the commanded input to be attenuated near the boundary while retaining the tangential component. Thus, the proposed strategy extends the admissibility-preserving control of \cite{kumar2026admissibility} from channel-wise constraints to a non-Cartesian joint admissible set.
\item We integrate AJ-APIR with recursive backstepping and establish the corresponding closed-loop guarantees. In particular, we derive a compact forward-invariant subset of the joint admissible set on which the realization channel remains uniformly nondegenerate.
\item An AJ-APC compatibility condition is then introduced to relate the available actuator authority to the tracking demand, under which the tracking and input realization errors converge exponentially, and all closed-loop signals remain uniformly bounded.
\end{itemize}
We demonstrate the effect of exploiting the joint-constraint geometry through two numerical studies and an aerospace application. The first study shows that satisfying the individual actuator bounds does not ensure joint admissibility: the channel-wise isotropic realization repeatedly crosses the shared capacity boundary, whereas AJ-APIR remains inside the joint admissible set for the same commanded input. The second study considers a nonlinear two-input strict-feedback system and shows that the proposed realization maintains both the individual and joint constraints while the output converges to its desired reference trajectory. We further implement the proposed method for UAV constrained path following in 3D, which further bolsters its effectiveness.

\section{Problem Formulation and AJ-APIR}
\label{sec:problem}
Consider a multi-input strict-feedback nonlinear system of the form
\begin{subequations}
\label{eq:system}
\begin{align}
  \dot{x}_i &= f_i(\bar{x}_i) + g_i(\bar{x}_i)\,x_{i+1},\quad i = 1,\ldots,n-1, \label{eq:system_a}\\
  \dot{x}_n &= f_n(\bar{x}_n) + \mathbf{g}_n^{\top}(\bar{x}_n)\,\bu,\label{eq:system_b}\\
  y         &= x_1, \label{eq:system_c}
\end{align}
\end{subequations}
where $\bar{x}_i \triangleq [x_1,\ldots,x_i]^{\top} \in \mathbb{R}^i$ denotes the partial state vector, $\bu = [u_1,\ldots,u_m]^{\top} \in \mathbb{R}^m$ is the  control input vector, $\mathbf{g}_n : \mathbb{R}^n \to \mathbb{R}^m$ is the input coupling vector at the final integrator ($n\textsuperscript{th}$) stage, and $y \in \mathbb{R}$ is the measurable system output. The functions $f_i : \mathbb{R}^i \to \mathbb{R}$ and $g_i : \mathbb{R}^i \to \mathbb{R}$, $i = 1,\ldots,n-1$, represent the drift and coupling dynamics of each intermediate stage, respectively. Note that the single-output, multi-input structure in \eqref{eq:system} is representative of a broad class of aerospace and robotic platforms, wherein a single performance channel is regulated through coordinated action of multiple actuators.
\begin{definition}[Admissible input sets]
Each input channel $u_k$, $k = 1,\ldots,m$, is independently subject to asymmetric bounds
\begin{equation}
  \label{eq:box_constraints}
  u_k \in \mathcal{U}_k \triangleq
  \bigl(u_{\min}^{(k)},\; u_{\max}^{(k)}\bigr),
  \quad u_{\min}^{(k)} < 0 < u_{\max}^{(k)},
\end{equation}
so that the Cartesian product $\Ubox \triangleq \prod_{k=1}^{m}\mathcal{U}_k$ defines the \emph{individual admissible set}. In addition to these per-channel limits, the $m$ actuators share a common operating envelope defined by the \emph{joint capacity constraint} 
\begin{equation}
  \label{eq:joint_constraint}
  \varphi(\bu) < 0,
\end{equation}
where $\varphi : \mathbb{R}^m \to \mathbb{R}$ is a smooth function whose zero level set $\{\bu : \varphi(\bu) = 0\}$ defines the boundary of the shared capacity envelope. The resulting \emph{joint admissible set} is
\begin{equation}
  \label{eq:joint_set}
  \Ujoint \triangleq
  \Ubox \cap
  \bigl\{\bu \in \mathbb{R}^m : \varphi(\bu) < 0\bigr\}.
\end{equation}
\end{definition}
The control objective is to synthesize $\bu(t)$ such that $\bu(t)\in\Ujoint$ for all $t\geq0$ while driving the output $y(t)$ to the prescribed reference trajectory.

We formulate the multi-input APIR by applying the scalar saturation mechanism independently to each input channel. We refer to this formulation as \emph{isotropic APIR}.
\begin{equation}\label{eq:isotropic_APIR}
  \dot{\bu} = \mathbf{D}(\bu)\buc - p_2\bu,
\end{equation}
where $\buc \in \mathbb{R}^m$ is the commanded input vector, $p_2 > 0$ is a damping parameter, and
\begin{equation} \label{eq:G_iso}
\mathbf{D}(\bu)
\triangleq
\diag\left( p_1^{(1)} S^{(1)}(u_1),\ldots, p_1^{(m)} S^{(m)}(u_m)\right)
\end{equation}
is a diagonal, channel-wise gain matrix. Here, $S^{(k)}(u_k)$ is the channel-$k$ admissibility factor, defined as
\begin{equation}\label{eq:Sk}
S^{(k)} \triangleq
\rho^{(k)}\left[1 \!-\! \left(\tfrac{u_k}{u_{\max}^{(k)}}\right)^{\gamma^{(k)}}\right]
\!+\! \left(1\!-\!\rho^{(k)}\right)\left[1 \!-\! \left(\tfrac{u_k}{u_{\min}^{(k)}}\right)^{\gamma^{(k)}}\right],
\end{equation}
where $\rho^{(k)} = 1$ if $u_k > 0$ and $\rho^{(k)} = 0$ otherwise, and $\gamma^{(k)}$ is a positive even integer.

Although \eqref{eq:isotropic_APIR} guarantees forward invariance \cite{kumar2026admissibility} of $u_k \in \mathcal{U}_k$, it attenuates $\buc$ uniformly across all directions of the $m$-dimensional input space. When $\bu$ is near the joint constraint boundary $\{\bu : \varphi(\bu) = 0\}$, the gradient of $\varphi$ defines a distinguished direction, the outward unit normal
\begin{equation}
  \label{eq:normal}
  \bn(\bu) \triangleq
  \frac{\nabla\varphi(\bu)}{\|\nabla\varphi(\bu)\|},
\end{equation}
which is the direction along which the capacity constraint can be violated.

Commands directed along $\bn(\bu)$ may drive the state toward the boundary, which may risk constraint violation, and therefore require strong attenuation. In contrast, commands lying in the $(m{-}1)$-dimensional tangent hyperplane $\mathcal{T}(\bu) \triangleq \left\{\mathbf{d} \in \mathbb{R}^m : \bn^{\top}(\bu),\mathbf{d} = 0\right\}$ do not change $\varphi(\bu)$. Hence, they require no such attenuation. By treating all directions symmetrically, the isotropic realization \eqref{eq:isotropic_APIR} attenuates tangential commands, increasing the commanded input demand and degrading tracking performance. This loss of directional selectivity motivates the anisotropic formulation developed in the sequel.

To solve the limitations discussed above, we propose replacing the diagonal, isotropic gain in \eqref{eq:G_iso} with a direction-aware matrix-valued gain that distinguishes normal from tangential directions relative to the joint constraint boundary. Let us define the tangential projector
\begin{equation}\label{eq:projector}
  \bP(\bu) \triangleq \bI_m - \bn(\bu)\,\bn^{\top}(\bu),
\end{equation}
where $\bI_m$ is the $m \times m$ identity matrix. 
\begin{figure*}[!ht]
\centering
\begin{subfigure}[c]{0.52\linewidth}
\centering
\begin{tikzpicture}[scale=0.86,>=Latex]
    \def\uminone{-3.0}
    \def\umaxone{2.5}
    \def\umintwo{-2.2}
    \def\umaxtwo{2.5}
    \fill[blue!7] (0,0) ellipse (3.2 and 2.0);
    \begin{scope}
    \clip (\uminone,\umintwo) rectangle (\umaxone,\umaxtwo);
    \fill[teal!15] (0,0) ellipse (3.2 and 2.0);
    \end{scope}
    \draw[red!75!black, thick, dashed] (\uminone,\umintwo) rectangle (\umaxone,\umaxtwo);
    \draw[blue!85!black, very thick] (0,0) ellipse (3.2 and 2.0);
    \foreach \c in {(\umaxone,1.248),(\umaxone,-1.248),(\uminone,0.696),(\uminone,-0.696)}
    {\fill[red!80!black] \c circle (2.1pt);
    }
    \node[red!75!black,  left] at (\uminone,2.85) {$\Ubox$};
    \node[blue!85!black,  right=1em, above] at (3.48,-1.18) {$\{\varphi=0\}$};
    \node[teal!45!black] at (-1.55,-1.02) {$\Ujoint$};
    \draw[->, thick] (-4.2,0) -- (4.4,0) node[below=2pt] {$u_1$};
    \draw[->, thick] (0,-2.9) -- (0,3.5) node[right=2pt] {$u_2$};
    \node[below left] at (0,0) {$0$};
    \draw (\umaxone,-0.09) -- (\umaxone,0.09);
    \node[red!75!black,below =3.5em,right] at (\umaxone,0) {$u_{\max}^{(1)}$};
    \draw (\uminone,-0.09) -- (\uminone,0.09);
    \node[red!75!black,below =3.5em,left] at (\uminone,0) {$u_{\min}^{(1)}$};
    \draw (-0.09,\umaxtwo) -- (0.09,\umaxtwo);
    \node[red!75!black,left = 3em, above] at (0,\umaxtwo) {$u_{\max}^{(2)}$};
    \draw (-0.09,\umintwo) -- (0.09,\umintwo);
    \node[red!75!black,left=3em, below] at (0,\umintwo) {$u_{\min}^{(2)}$};
    \coordinate (U) at (1.094,1.879);
    \draw[thick, dashed, gray!70!black] ($(U)-(-12.8:1.2)$) -- ($(U)+(-12.8:2.35)$);
    \node[gray!70!black,  right=1em,below] at ($(U)+(-12.8:2.35)$) {$\mathcal{T}(\bu)$};
    \draw[->, very thick, blue!80!black] (U) -- ++(77.2:1.20);
    \node[blue!80!black,  right=4pt, above=1pt] at ($(U)+(77.2:1.20)$) {$\bn(\bu)$};
    \draw[->, very thick, orange!90!black] (U) -- ++(50:0.95);
    \node[orange!90!black,  above right=-0.1em] at ($(U)+(50:0.95)$) {$\mathbf{D}(\bu)\buc$};
    \draw[->, very thick, teal!65!black] (U) -- ++(-12.8:1.45);
    \node[teal!55!black,  above=0.5em, right] at ($(U)+(-12.8:1.45)$) {$\bG(\bu)\buc$};
    \fill[black] (U) circle (2.1pt);
    \node[above left=4pt] at (U) {$\bu$};
\end{tikzpicture}
\caption{Admissible input geometry and normal--tangent decomposition of the commanded input at $\bu$.}
\label{fig:sets_a}
\end{subfigure}
\hfill
\begin{subfigure}[c]{0.45\linewidth}
\centering
\begin{tikzpicture}[scale=1.92,>=Latex]
    \def\Rb{2.6}
    \begin{scope}
    \clip (-2.0,-1.45) rectangle (2.0,1.25);
    \fill[teal!15] (0,-\Rb) circle (\Rb);
    \draw[blue!85!black, very thick] (0,-\Rb) circle (\Rb);
    \end{scope}
    \coordinate (V) at (0,0);
    \draw[gray!70!black, thick, dashed] (-1.75,0) -- (1.75,0);
    \node[gray!40!black,  above] at (1.38,0) {$\mathcal{T}(\bu)$};
    \draw[->, blue!80!black, very thick] (V) -- (0,0.92);
    \node[blue!80!black,  right=4pt] at (0,0.72) {$\bn(\bu)$};
    \draw[orange!90!black, thick] (V) circle (0.46);
    \draw[teal!65!black, very thick] (V) ellipse (1.15 and 0.16);
    \fill (V) circle (1.3pt);
    \node[ below=4pt,right] at (V) {$\bu$};
    \node[ teal!45!black] at (0,-1.05) {$\varphi(\bu)<0$};
    \node[ blue!85!black] at (-1.32,0.36) {$\{\varphi=0\}$};
    \draw[orange!90!black, thick] (-1.98,1.12) -- (-1.65,1.12);
    \node[ right=3pt] at (-1.65,1.12) {Isotropic};
    \draw[teal!65!black, very thick] (-1.98,0.86) -- (-1.65,0.86);
    \node[ right=3pt] at (-1.65,0.86) {Anisotropic};
\end{tikzpicture}
\caption{Realizable-velocity sets at a boundary point for isotropic and anisotropic input realization.}
\label{fig:sets_b}
\end{subfigure}
\caption{Geometric interpretation of the anisotropic admissibility-preserving input realization.}
\label{fig:sets}
\end{figure*}
\begin{definition}[AJ-APIR]
The anisotropic joint-admissibility-preserving input realization is defined by the differential equation
\begin{equation}\label{eq:AJ_APIR}
\dot{\bu} = \bG(\bu)\,\buc - p_2\,\bu,
\end{equation}
with the anisotropic gain matrix
\begin{equation}\label{eq:G_aniso}
  \bG(\bu) \triangleq G_{\perp}(\bu)\,\bn(\bu)\,\bn^{\top}(\bu) + G_{\parallel}(\bu)\,\bP(\bu),
\end{equation}
where $G_{\perp}(\bu) > 0$ and $G_{\parallel}(\bu) > 0$ are scalar gain functions that modulate the realization speed in the normal and tangential directions, respectively.
\end{definition}
Note that $G_{\perp}(\bu)$ is required to vanish as $\bu$ approaches the joint constraint boundary $\{\varphi(\bu) = 0\}$ from the interior of $\Ujoint$, thereby preventing the input from escaping through $\varphi(\bu) = 0$, while $G_{\parallel}(\bu)$ is permitted to remain bounded away from zero so that tangential realization is unimpeded near the boundary.

\Cref{fig:sets} illustrates the above mentioned effects. In \Cref{fig:sets_a}, the joint admissible set $\Ujoint$ is the intersection of the box $\Ubox$ with the shared envelope $\{\varphi<0\}$, and the marked points are the corners at which a box face meets the envelope. At the boundary point $\bu$, the joint constraint is active while both box constraints are slack, so every channel factor $S^{(k)}(u_k)$ in \eqref{eq:Sk} is strictly positive there. The isotropic gain \eqref{eq:G_iso} vanishes only as a channel approaches its own bound and is therefore unattenuated at such a point, so $\mathbf{D}(\bu)\buc$ has a component along $\bn(\bu)$ and nothing in the realization opposes motion across the envelope. The anisotropic gain \eqref{eq:G_aniso} instead annihilates the normal direction through $G_\perp$, so that $\bn^{\top}\bG(\bu)\buc = G_\perp\,\bn^{\top}\buc = 0$ and the commanded contribution lies in $\mathcal{T}(\bu)$. \Cref{fig:sets_b} shows the same contrast through the sets of velocities realizable at $\bu$. A sphere for the isotropic gain, and for \eqref{eq:G_aniso} an ellipsoid flattened along $\bn(\bu)$ but unattenuated on $\mathcal{T}(\bu)$.

\begin{remark} \label{rem:spectral}
The gain matrix \eqref{eq:G_aniso} is symmetric positive definite with eigenvalue $G_{\perp}(\bu)$ in the direction of $\bn(\bu)$ and eigenvalue $G_{\parallel}(\bu)$ in the $(m{-}1)$-dimensional orthogonal complement. When $G_{\perp} \ll G_{\parallel}$ near $\{\varphi = 0\}$, the ellipsoid of realizable velocities is highly elongated along the boundary tangent plane, meaning that the realization dynamics can track boundary-parallel commands at full speed while simultaneously preventing outward boundary crossing. In contrast, the isotropic gain \eqref{eq:G_iso} corresponds to a spherical realizable-velocity ellipsoid that offers no directional discrimination. It is worth noting that \eqref{eq:G_aniso} reduces to the isotropic diagonal form \eqref{eq:G_iso} when $\varphi$ is absent and $G_\perp = G_\parallel = \diag(p_1^{(k)}S^{(k)})$, which shows that AJ-APIR generalizes the existing single-input framework to a multi-input setting.
\end{remark}
The following standard assumptions are made.
\begin{assumption} \label{asm:reference}
The reference output trajectory $y_d(t)$ and its derivatives up to order $n+1$ are uniformly bounded and piecewise continuously differentiable.
\end{assumption}
\begin{assumption} \label{asm:model}
The functions $f_i$, $g_i$ $(i{=}1,\ldots,n)$ are continuously differentiable. Moreover, $g_i(\bar{x}_i) \neq 0$ for all $\bar{x}_i$ in the region of interest, and the input coupling vector satisfies $\sigma_{\min}(\mathbf{g}_n^{\top}(\bar{x}_n)) \geq g_{\min} > 0$, where $\sigma_{\min}(\cdot)$ denotes the minimum singular value.
\end{assumption}

The non-vanishing condition on $g_i$ excludes loss of controllability at intermediate integrator stages, which would otherwise render the virtual control design ill-conditioned. The lower bound $g_{\min}$ in \Cref{asm:model} ensures that the final-stage coupling matrix $\bg_n^{\top}$ remains persistently non-degenerate, so that every commanded input $u_c^*$ is achievable by some physical $\bu$.

\begin{assumption} \label{asm:phi}
The joint constraint function $\varphi : \mathbb{R}^m \to \mathbb{R}$ is twice continuously differentiable and convex. Its gradient satisfies $\|\nabla\varphi(\bu)\| \geq \varphi_{\min} > 0$	on a neighborhood of $\{\varphi = 0\}$, and $\Ujoint$ is nonempty, connected, and satisfies $\mathbf{0} \in \mathrm{int}(\Ujoint)$. 
\end{assumption}

The convexity and origin-inclusion conditions in \Cref{asm:phi} imply $\nabla\varphi^{\top}(\bu)\,\bu \geq 0$ for all $\bu$ on the boundary $\{\varphi = 0\}$, by the supporting hyperplane theorem \cite{boyd2004convex_chapter}. Note that when the normal gain $G_\perp$ vanishes at the joint boundary, the residual dynamics point strictly inward. For physical actuators, the origin, which represents zero net actuation, is invariably a feasible operating point, so the origin-inclusion condition is mild.
\begin{figure*}[!ht]
\centering
\begin{tikzpicture}[scale=1.02,>=Latex,
  panelnote/.style={align=center,text width=44mm}]
\begin{scope}
  \def\R{1.5}
  \fill[gray!8]             (-1.7,-1.6) rectangle (2.0,1.8);
  \draw[black,dashed,thick] (-1.7,-1.6) rectangle (2.0,1.8);
  \fill[blue!16] (0,0) circle (\R);
  \draw[blue!70!black,line width=1.6pt] (0,0) circle (\R);
  \fill (0,0) circle (0.9pt);
  \node[] at (1.60,1.57) {$\Ubox$};
  \node[blue!60!black] at (0.05,0.60) {$\Ujoint$};
  \node[] at (0.15,-2.15) {(a)~$\{\varphi<0\}\subset\Ubox$.};
\end{scope}
\begin{scope}[xshift=5.6cm]
  \def\R{1.8}
  \draw[blue!70!black,line width=0.9pt,densely dotted] (0,0) circle (\R);
  \fill[blue!16,opacity=.45] (0,0) circle (\R);
  \fill[gray!22]                (-1.0,-0.8) rectangle (1.1,0.9);
  \draw[black,line width=1.6pt] (-1.0,-0.8) rectangle (1.1,0.9);
  \fill (0,0) circle (0.9pt);
  \node[blue!60!black] at (1.52,1.62) {$\{\varphi=0\}$};
  \node[] at (0.05,0.25) {$\Ujoint=\Ubox$};
  \node[] at (0.15,-2.15) {(b)~$\Ubox\subset\{\varphi<0\}$.};
\end{scope}
\begin{scope}[xshift=11.2cm]
  \def\R{1.5}
  \fill[gray!8]             (-1.7,-1.2) rectangle (2.0,1.8);
  \draw[black,dashed,thick] (-1.7,-1.2) rectangle (2.0,1.8);
  \begin{scope}
    \clip (-1.7,-1.2) rectangle (2.0,1.8);
    \fill[blue!16] (0,0) circle (\R);
    \draw[blue!70!black,line width=1.6pt] (0,0) circle (\R);
  \end{scope}
  \draw[black,line width=1.6pt] (-0.9,-1.2) -- (0.9,-1.2);
  \fill[red!80!black] ( 0.9,-1.2) circle (1.7pt);
  \fill[red!80!black] (-0.9,-1.2) circle (1.7pt);
  \fill (0,0) circle (0.9pt);
  \node[] at (1.60,1.57) {$\Ubox$};
  \node[blue!60!black] at (0.05,0.60) {$\Ujoint$};
  \node[] at (0.15,-2.15) {(c)~neither contains the other.};
\end{scope}
\end{tikzpicture}
\caption{Possible configurations between $\Ubox$ and joint capacity constraint $\{\varphi<0\}$.}
\label{fig:geometries}
\end{figure*}

It is worth mentioning that a box $\Ubox$ and a smooth envelope $\{\varphi < 0\}$ can meet in three qualitatively distinct ways, shown in \Cref{fig:geometries}. If $\{\varphi < 0\} \subset \Ubox$, the joint constraint alone is active on $\partial\Ujoint$ and the channel factors $S^{(k)}(u_k)$ are bounded away from zero on $\Ujoint$. If $\Ubox \subset \{\varphi < 0\}$, only the box is active and the realization reduces to a channel-wise one, for which the anisotropic gain \eqref{eq:G_aniso} offers no advantage over \eqref{eq:G_iso}. If neither set contains the other, both constraints are active on distinct portions of $\partial\Ujoint$, and $\Ujoint$ carries corners at which a box face meets the envelope. The third configuration, \Cref{fig:geometries}(c), is the general one, and it contains the first two as a special case. It is also the configuration in which anisotropy is of consequence, since the normal and tangential directions of the envelope are then distinguished along a portion of the boundary on which the box remains inactive. \Cref{asm:phi} accordingly places no inclusion between $\Ubox$ and $\{\varphi < 0\}$, and the results of \Cref{sec:results} hold for all three.

\begin{assumption} \label{asm:init}
The initial conditions satisfy $\bu(0) \in \Ujoint$, and the initial tracking error $e(0) \triangleq y(0) - y_d(0)$ belongs to a compact set $\Omega_0 \subset \mathbb{R}$. Moreover, the reference trajectory $y_d$ is separated from the joint constraint boundary in the sense that $\inf_{t \geq 0} \{-\varphi(\bu^{*}(t))\} > 0$, where $\bu^{*}(t)$ is any ideal (unconstrained) control input that achieves perfect tracking $y(t) \equiv y_d(t)$.
\end{assumption}

The separation condition in \Cref{asm:init} is a compatibility requirement analogous to the APC compatibility condition in the single-input framework \cite{kumar2026admissibility}. It asserts that the tracking task is feasible in the sense that perfect tracking can, in principle, be accomplished within $\Ujoint$. Enforcing it as a standing assumption separates the question of feasibility, which depends on the reference trajectory and the system's input-output gain, from the question of constraint satisfaction during transients, which is addressed by the AJ-APIR dynamics \eqref{eq:AJ_APIR}. In practice, this condition can be verified offline for a given mission profile prior to deployment.

With the system \eqref{eq:system}, the joint admissible set \eqref{eq:joint_set}, and the AJ-APIR dynamics \eqref{eq:AJ_APIR} in
place, the control objectives of this work are formally stated as follows.

\begin{problem}[Joint Admissibility and Output Tracking] \label{prob:main}
Design the commanded input $\buc$ in \eqref{eq:AJ_APIR} such that the following conditions hold simultaneously for all $t \geq 0$:
\begin{enumerate}[label=\emph{(\roman*)}]
  \item \textbf{Joint admissibility:} $\bu(t) \in \Ujoint$ for all $t \geq 0$, i.e., the realized input remains strictly inside the joint admissible set at all times.
  \item \textbf{Asymptotic output tracking:} The tracking error $e(t) \triangleq y(t) - y_d(t)$ satisfies $e(t) \to 0$ as
    $t \to \infty$, with all closed-loop signals remaining uniformly bounded.
\end{enumerate}
\end{problem}

\section{Parametric AJ-APIR Design}\label{sec:results}
We propose the  parametrization of the gain functions in \eqref{eq:G_aniso} as
\begin{subequations}\label{eq:gains}
	\begin{align}
		G_{\perp}(\bu) &\triangleq
		p_{1,\perp}\,\bigl(-\varphi(\bu)\bigr)^{\gamma_\perp}\,\Sbox(\bu),
		\label{eq:Gperp}\\
		G_{\parallel}(\bu) &\triangleq
		p_{1,\parallel}\,\Sbox(\bu),
		\label{eq:Gpar}
	\end{align}
\end{subequations}
where $p_{1,\perp}, p_{1,\parallel}, \gamma_\perp > 0$ are design parameters, and
\begin{equation}
	\label{eq:Sbox}
	\Sbox(\bu) \triangleq \prod_{k=1}^m S^{(k)}(u_k)
\end{equation}
is the product of per-channel admissibility factors defined in \eqref{eq:Sk}. 

The product structure in \eqref{eq:gains} endows the AJ-APIR with a two-level constraint hierarchy. First, $\Sbox(\bu) \to 0$ whenever any $u_k$ approaches its box bound, so both $G_\perp$ and $G_\parallel$ vanish simultaneously. This mimics the single-input APIR, that is, for a commanded input that remains bounded as $\Sbox \to 0$, the realized velocity at a box face reduces to $\dot{\bu} = -p_2\bu$, which points strictly inward, so no such command can drive an individual actuator out of $\mathcal{U}_k$. Second, $(-\varphi(\bu))^{\gamma_\perp} \to 0$ as $\bu$ approaches the joint boundary from the interior, so $G_\perp \to 0$ while $G_\parallel$ have a positive value proportional to $\Sbox$. This is precisely the anisotropy that prevents joint constraint escape while preserving tangential authority. Near the joint boundary but away from any box boundary, $G_\parallel \approx p_{1,\parallel} > G_\perp \approx 0$, so the ratio $G_\parallel / G_\perp \to \infty$, the realizable-velocity ellipsoid becomes arbitrarily elongated along the tangent plane, which is discussed in \Cref{rem:spectral}.

The following theorem establishes admissibility of the realized input under a bounded commanded input, and constitutes the multi-input counterpart of \cite[Theorem 1]{kumar2026admissibility}.

\begin{theorem}\label{lem:FI}
Let \Cref{asm:model,asm:phi} hold, regard $G_\perp,G_\parallel$ be as in \eqref{eq:gains}, and put $\varphi_0 \triangleq -\varphi(\mathbf{0}) > 0$, $\varphi_{\sup} \triangleq \sup_{\Ujoint}(-\varphi) < \infty$, $L \triangleq \sup_{\Ujoint}\|\nabla\varphi\| < \infty$ and $C \triangleq \max\{p_{1,\perp}\varphi_{\sup}^{\gamma_\perp},\,p_{1,\parallel}\}$. Let the commanded input be piecewise continuous and satisfy $\|\buc(t)\| \leq \xi$ for all $t \geq 0$. For each channel $k$, let $u_{k,\xi}^{+} \in (0,u_{\max}^{(k)})$ and $u_{k,\xi}^{-} \in (u_{\min}^{(k)},0)$ denote the unique roots of
\begin{subequations}\label{eq:channelroots}
\begin{align}
C\xi\bigl[1-(u/u_{\max}^{(k)})^{\gamma^{(k)}}\bigr]&= p_2\,u,\label{eq:rootplus}\\
-C\xi\bigl[1-(u/u_{\min}^{(k)})^{\gamma^{(k)}}\bigr]&=p_2\,u,\label{eq:rootminus}
\end{align}
\end{subequations}
and set $\xi_\varphi(\xi) \triangleq \min\bigl\{\varphi_0/2,\ [\,p_2\varphi_0/(2L\,p_{1,\perp}\,\xi)\,]^{1/\gamma_\perp}\bigr\}$ and $\eta_{\xi,0} \triangleq \min\{-\varphi(\bu(0)),\,\xi_\varphi(\xi)\}$. Then, for every $\bu(0) \in \Ujoint$, the solution of \eqref{eq:AJ_APIR} exists for all $t \geq 0$ and satisfies $\bu(t) \in \mathcal{K}_{\xi,0}$, where
\begin{equation}\label{eq:Kxi0}
\begin{aligned}
{\mathcal{K}_{\xi,0}} &\;{\triangleq\; \Bigl(\textstyle\prod_{k=1}^{m}\mathcal{I}_{k,\xi}\Bigr) \cap\, \Ujoint^{\eta_{\xi,0}},}\\
{\mathcal{I}_{k,\xi}} &\;{\triangleq\; \bigl[\min\{u_k(0),u_{k,\xi}^{-}\},\ \max\{u_k(0),u_{k,\xi}^{+}\}\bigr].}
\end{aligned}
\end{equation}
{The set $\mathcal{K}_{\xi,0}$ is a compact subset of $\Ujoint$ containing $\bu(0)$; in particular $\Ujoint$ is forward invariant, and}
\begin{equation}\label{eq:Dxi0}\underline{D}_{\xi,0} \;\triangleq\; \min_{\bu \in \mathcal{K}_{\xi,0}}\ \min\{G_\perp(\bu),\,G_\parallel(\bu)\} \;>\; 0 .
\end{equation}
\end{theorem}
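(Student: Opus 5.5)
The plan is to prove forward invariance of the compact set $\mathcal{K}_{\xi,0}$ by a barrier/comparison argument applied separately to each scalar constraint defining it: the $m$ channel intervals $\mathcal{I}_{k,\xi}$ and the sublevel condition $-\varphi(\bu)\ge \eta_{\xi,0}$ (reading $\Ujoint^{\eta}$ as $\{\bu\in\Ujoint:-\varphi(\bu)\ge\eta\}$). First, note that the vector field $\bG(\bu)\buc-p_2\bu$ is locally Lipschitz in $\bu$ on $\Ujoint$ and piecewise continuous in $t$. Here $\bn$ is $C^1$ near the boundary by \Cref{asm:phi}. Away from the boundary, where $\nabla\varphi$ might vanish, $\bG$ is still well defined: I would rewrite $\bG=G_\parallel\bI_m+(G_\perp-G_\parallel)\bn\bn^\top$ and, if necessary, restrict attention to the region where the gradient bound holds. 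Hence Carathéodory solutions exist locally and are unique. Global existence then follows once the solution is shown to stay in the compact set $\mathcal{K}_{\xi,0}\subset\Ujoint$. The uniform bound on $\|\bG\|$ comes from its eigenvalues $G_\perp,G_\parallel\le C\,\Sbox\le C$, since each $S^{(k)}\in(0,1]$ on $\mathcal{U}_k$.

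\textbf{Box part.} For channel $k$, $\dot u_k=e_k^\top\bG\buc-p_2u_k$, and $|e_k^\top\bG\buc|\le\|\bG\|\xi\le C\xi\,\Sbox\le C\xi\,S^{(k)}(u_k)$, because the remaining factors are at most $1$. Whenever $u_k\ge u_{k,\xi}^{+}>0$, monotonicity of $S^{(k)}$ on $(0,u_{\max}^{(k)})$ together with the definition \eqref{eq:rootplus} gives $C\xi S^{(k)}(u_k)\le p_2u_k$, so $\dot u_k\le0$. Uniqueness of the root comes from the left side of \eqref{eq:rootplus} being decreasing and the right side increasing. Thus $u_k(t)\le\max\{u_k(0),u_{k,\xi}^{+}\}$ for all $t$, and symmetrically for the lower bound. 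Since these endpoints lie strictly inside $\mathcal{U}_k$, $\Sbox$ stays bounded away from zero on the resulting box.

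\textbf{Joint part.} Set $h=-\varphi(\bu)$. Using $\bn^\top\bG=G_\perp\bn^\top$, we get $\dot h=-\nabla\varphi^\top\dot\bu=-\|\nabla\varphi\|G_\perp\bn^\top\buc+p_2\nabla\varphi^\top\bu$. Convexity gives $\varphi(\mathbf 0)\ge\varphi(\bu)-\nabla\varphi^\top\bu$, i.e.\ $\nabla\varphi^\top\bu\ge\varphi_0-h$. Hence
\[
\dot h\ge p_2(\varphi_0-h)-L\,p_{1,\perp}h^{\gamma_\perp}\xi .
\]
When $h\le\xi_\varphi(\xi)$, the two terms satisfy $p_2(\varphi_0-h)\ge p_2\varphi_0/2$ and $Lp_{1,\perp}\xi h^{\gamma_\perp}\le p_2\varphi_0/2$, so $\dot h\ge0$. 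Therefore $h(t)\ge\min\{h(0),\xi_\varphi\}=\eta_{\xi,0}>0$. This is the main obstacle in the argument: it needs the convexity/supporting-hyperplane step applied at interior points rather than only on the boundary, and the bound $\|\nabla\varphi\|\le L$ on $\Ujoint$. Both are available from \Cref{asm:phi} and the stated finiteness of $L$.

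To conclude, a standard "first exit time" contradiction shows that the trajectory cannot leave $\mathcal{K}_{\xi,0}$: at any putative exit, the relevant scalar would have to be strictly moving outward, contradicting the sign computations above. Compactness of $\mathcal{K}_{\xi,0}$ holds because it is closed (a box intersected with the closed sublevel set $\{\varphi\le-\eta_{\xi,0}\}$) and bounded. It lies in $\Ujoint$ because its box closure sits strictly inside $\Ubox$ and $\varphi\le-\eta_{\xi,0}<0$ on it. It contains $\bu(0)$ by construction of the endpoints and of $\eta_{\xi,0}$. Finally, on $\mathcal{K}_{\xi,0}$ both $\Sbox$ and $(-\varphi)^{\gamma_\perp}$ are continuous and strictly positive. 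A continuous positive function attains a positive minimum on a compact set, which yields \eqref{eq:Dxi0}.
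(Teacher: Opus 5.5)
Your proposal follows the paper's proof essentially step for step: the channel-wise comparison through $\|\bG\|\le C\,\Sbox\le C\,S^{(k)}(u_k)$ with the roots $u_{k,\xi}^{\pm}$, the convexity bound $\nabla\varphi^{\top}\bu\ge\varphi_0+\varphi(\bu)$ applied at interior points with the threshold $\xi_\varphi(\xi)$, and the compactness/positive-minimum argument for $\underline{D}_{\xi,0}$ are the paper's ingredients, and your non-strict $\dot h\ge0$ on $\{h\le\xi_\varphi\}$ plus a first-exit (integration) argument validly replaces the paper's strict inequality and limit $\mu\uparrow\eta_{\xi,0}$. The one claim to drop is that rewriting $\bG=G_\parallel\bI_m+(G_\perp-G_\parallel)\bn\bn^{\top}$ makes $\bG$ well defined where $\nabla\varphi=0$: $\bn\bn^{\top}$ has no limit at critical points of $\varphi$, which can lie inside $\Ujoint$ (your convexity bound places them in $\{-\varphi\ge\varphi_0\}$, e.g.\ the origin for $\varphi=\|\bu\|^2-R^2$), so local Lipschitzness and uniqueness fail there and a generalized (e.g.\ Filippov) solution is needed---the paper likewise only covers this point by appealing to a generalized solution concept---though this is harmless because your box estimate holds for every unit vector $\bn$ and your joint estimate is only used where $-\varphi\le\varphi_0/2$, away from that critical set.
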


\begin{proof}
The proof establishes the two invariance properties separately, one for each level of the constraint hierarchy, and then combines them.

\textit{Box boundaries.}
{Fix $k$ and consider the branch $u_k \geq 0$, on which $\rho^{(k)} = 1$ in \eqref{eq:Sk}. Since $\|\bG\| = \max\{G_\perp,G_\parallel\} \leq C\,\Sbox \leq C\,S^{(k)}(u_k)$, every factor of \eqref{eq:Sbox} lying in $[0,1]$, the $k$-th component of \eqref{eq:AJ_APIR} obeys $\dot{u}_k \leq \|\bG\|\xi - p_2u_k \leq \Phi_k^{+}(u_k)$ with $\Phi_k^{+}(u) \triangleq C\xi[1-(u/u_{\max}^{(k)})^{\gamma^{(k)}}] - p_2u$. The map $\Phi_k^{+}$ is strictly decreasing on $[0,u_{\max}^{(k)}]$ and satisfies $\Phi_k^{+}(0) = C\xi > 0$ and $\Phi_k^{+}(u_{\max}^{(k)}) = -p_2u_{\max}^{(k)} < 0$, so \eqref{eq:rootplus} admits the unique root $u_{k,\xi}^{+}$, that root is interior to $(0,u_{\max}^{(k)})$, and $\dot{u}_k \leq 0$ at every $u_k \geq u_{k,\xi}^{+}$. On the branch $u_k \leq 0$ one has $\rho^{(k)} = 0$ and, symmetrically, $\dot{u}_k \geq \Phi_k^{-}(u_k)$ with $\Phi_k^{-}(u) \triangleq -C\xi[1-(u/u_{\min}^{(k)})^{\gamma^{(k)}}] - p_2u$, strictly decreasing on $[u_{\min}^{(k)},0]$ with $\Phi_k^{-}(u_{\min}^{(k)}) = -p_2u_{\min}^{(k)} > 0$ and $\Phi_k^{-}(0) = -C\xi < 0$, so \eqref{eq:rootminus} admits the unique root $u_{k,\xi}^{-} \in (u_{\min}^{(k)},0)$ and $\dot{u}_k \geq 0$ at every $u_k \leq u_{k,\xi}^{-}$. The vector field therefore points inward at both endpoints of $\mathcal{I}_{k,\xi}$, which is consequently positively invariant, and the same holds for every channel.}

\textit{Joint boundary.}
Compute $\dot{\varphi} = \nabla\varphi^{\top}\dot{\bu}$. Using $\nabla\varphi = \|\nabla\varphi\|\bn$ and the relation in  \eqref{eq:G_aniso} implies $\nabla\varphi^{\top}\bG\buc = \|\nabla\varphi\| G_\perp (\bn^{\top}\buc)$ (since $\bn^{\top}\bP = 0$). Therefore,
\begin{equation}
\label{eq:phidot}
\dot{\varphi}
= \|\nabla\varphi\|\,G_\perp(\bn^{\top}\buc)
- p_2\nabla\varphi^{\top}\bu .
\end{equation}

Evaluating the convexity inequality $\varphi(\mathbf{y}) \geq \varphi(\bu) + \nabla\varphi^{\top}(\bu)(\mathbf{y}-\bu)$ at $\mathbf{y} = \mathbf{0}$, without imposing $\varphi(\bu) = 0$, gives the global support bound
\begin{equation}\label{eq:strictsupport}
\nabla\varphi^{\top}(\bu)\bu \geq \varphi(\bu) - \varphi(\mathbf{0}) = \varphi_0 + \varphi(\bu), \; \bu \in \Ujoint.
\end{equation}
 On the region $\{-\varphi < \xi_\varphi(\xi)\}$ one has $-\varphi < \varphi_0/2$, hence $\varphi_0 + \varphi > \varphi_0/2$, while $\Sbox \leq 1$ and $(-\varphi)^{\gamma_\perp} < \xi_\varphi^{\gamma_\perp} \leq p_2\varphi_0/(2Lp_{1,\perp}\xi)$ bound the first term of \eqref{eq:phidot}. Substituting both into \eqref{eq:phidot} gives $\dot\varphi < p_2\varphi_0/2 - p_2\varphi_0/2 = 0$, so $-\varphi$ is strictly increasing there. Consequently $\{-\varphi \geq \mu\}$ is forward invariant for every $\mu < \eta_{\xi,0}$, and letting $\mu$ increase to $\eta_{\xi,0}$ gives $-\varphi(\bu(t)) \geq \eta_{\xi,0}$ for all $t \geq 0$.

\textit{Compactness and completeness.} The two invariance properties confine $\bu(t)$ to $\mathcal{K}_{\xi,0}$ on the maximal interval of existence. Each $\mathcal{I}_{k,\xi}$ is a compact subinterval of the open interval $\mathcal{U}_k$, because $u_k(0) \in \mathcal{U}_k$ and $u_{k,\xi}^{\pm}$ are interior, while $\{-\varphi \geq \eta_{\xi,0}\}$ is closed with $\eta_{\xi,0} > 0$ since $\varphi(\bu(0)) < 0$. Hence $\mathcal{K}_{\xi,0}$ is compact, contains $\bu(0)$, and lies in $\Ujoint$. Boundedness of the realized input together with local Lipschitzness of the right-hand side of \eqref{eq:AJ_APIR} away from the origin yields forward completeness, solutions being understood in the Carath\'eodory sense at $\bu = \mathbf{0}$, where $\bn$ is undefined. Finally, $\Sbox > 0$ on the compact product of channel intervals and $-\varphi \geq \eta_{\xi,0} > 0$ on the second factor, so the continuous functions $G_\perp$ and $G_\parallel$ of \eqref{eq:gains} are strictly positive on $\mathcal{K}_{\xi,0}$ and therefore attain positive minima there, which is \eqref{eq:Dxi0}.
\end{proof}
To achieve the desired tracking objective, the proposed AJ-APIR is interconnected with the nonlinear plant, yielding the following augmented system:
\begin{subequations}
\label{eq:aug_system}
\begin{align}
  \dot{x}_i &= f_i(\bar{x}_i) + g_i(\bar{x}_i)\,x_{i+1},
              \quad i = 1,\ldots,n-1, \label{eq:system_a1}\\
  \dot{x}_n &= f_n(\bar{x}_n) + \mathbf{g}_n^{\top}(\bar{x}_n)\,\bu,\label{eq:system_b1}
             \\
\dot {\bu}&= \bG(\bu)\bu_c-p_2\bu\label{eq:system_c1}
\end{align}
\end{subequations}

Since the AJ-APIR generates a continuously differentiable input signal, its output can be directly embedded into the recursive backstepping design.  Leveraging this property, the APC is developed within the
standard backstepping framework. To this end, we introduce the following error coordinates for the output-tracking objective $z_1 \triangleq y - y_d$, $z_i \triangleq x_i - \eta_{i-1}$ for $i = 2,\ldots,n$, where $\eta_{i-1}$ is the stabilizing function designed at the $(i{-}1)\textsuperscript{th}$ step, and $\varrho$ is the realization error defined at the $n\textsuperscript{th}$-step. We now proceed with backstepping design in a step-wise manner.

\textit{Step 1:}
Differentiating $z_1 = y - y_d$ along \eqref{eq:system} gives
\begin{equation}\label{eq:z1dot}
\dot{z}_1 = f_1(x_1) + g_1(x_1)\,x_2 - \dot{y}_d .
\end{equation}
Viewing $x_2$ as a virtual control and taking $V_1 \triangleq \tfrac{1}{2}z_1^2$, one may obtain $\dot{V}_1 = z_1[f_1 + g_1 x_2 - \dot{y}_d]$, which, on substituting the value of $x_2$ as $x_2 = z_2 + \eta_1$, becomes
\begin{equation}\label{eq:V1dot_1}
\dot{V}_1 = z_1\bigl[f_1 + g_1(z_2 + \eta_1) - \dot{y}_d\bigr].
\end{equation}
Designing the stabilizing function as
\begin{equation}\label{eq:eta1}
\eta_1 \triangleq \frac{1}{g_1(x_1)}
\bigl[\dot{y}_d - f_1(x_1) - k_1 z_1\bigr],
\ k_1 > 0,
\end{equation}
yields
\begin{equation}\label{eq:V1dot}
\dot{V}_1 = -k_1 z_1^2 + g_1(x_1)\,z_1 z_2 .
\end{equation}

\textit{Step $i$ $(2 \leq i \leq n-1)$:} Differentiating $z_i = x_i - \eta_{i-1}$ gives $\dot{z}_i = f_i(\bar{x}_i) + g_i(\bar{x}_i)x_{i+1} - \dot{\eta}_{i-1}$. Augmenting the Lyapunov function as $V_i \triangleq V_{i-1} + \tfrac{1}{2}z_i^2$ and using the result of the preceding step together with the fact that $x_{i+1} = z_{i+1} + \eta_i$, one may obtain
\begin{align}
\dot{V}_i =&~ -\sum_{j=1}^{i-1} k_j z_j^2
+ g_{i-1}(\bar{x}_{i-1})\,z_{i-1} z_i \nonumber\\
&~+ z_i\bigl[f_i + g_i(z_{i+1} + \eta_i) - \dot{\eta}_{i-1}\bigr].\label{eq:Vidot_1}
\end{align}
The stabilizing function is designed as
\begin{equation}
\label{eq:virtual}
\eta_i \triangleq \frac{\dot{\eta}_{i-1} - f_i(\bar{x}_i)
- g_{i-1}(\bar{x}_{i-1})\,z_{i-1} - k_i z_i}{g_i(\bar{x}_i)},
\ k_i > 0.
\end{equation}
By substituting the value of $\eta_i$ from \eqref{eq:virtual} into \eqref{eq:Vidot_1}, we obtain
\begin{equation}\label{eq:Vidot}
\dot{V}_i = -\sum_{j=1}^{i} k_j z_j^2 + g_i(\bar{x}_i)\,z_i z_{i+1}.
\end{equation}
Note that $\dot{\eta}_{i-1}$ depends on $\bar{x}_i$ and $y_d^{(i)}$ through the chain rule and is bounded under \Cref{asm:reference,asm:model} whenever the state evolves in a compact set.

\textit{Step $n$:} At this step, we define $v\triangleq \bg_n^{\top}(\bar{x}_n)\,\bu$, so that $\dot{z}_n = f_n(\bar{x}_n) + v - \dot{\eta}_{n-1}$. With $V_n \triangleq V_{n-1} + \tfrac{1}{2}z_n^2$ and \eqref{eq:Vidot}, the time derivative of $V_n$ becomes
\begin{align}
\dot{V}_n =&~ -\sum_{j=1}^{n-1} k_j z_j^2 + g_{n-1}(\bar{x}_{n-1})z_{n-1}z_n\nonumber\\
&~+ z_n\left[f_n(\bar{x}_n) + v - \dot{\eta}_{n-1}\right]. \label{eq:Vndot_1}
\end{align}
Unlike a standard backstepping step, $v$ cannot be assigned its stabilizing value directly, since $v$ is produced by the AJ-APIR \eqref{eq:AJ_APIR} and carries dynamics of its own. We therefore introduce a realization error as
\begin{equation}\label{eq:rho}
\varrho \triangleq v - \eta_n ,
\end{equation}
where $\eta_n$ is the stabilizing function. If we design $\eta_n$ as
\begin{equation}
\label{eq:alphan}
\eta_n \triangleq
\dot{\eta}_{n-1} - f_n(\bar{x}_n)
- g_{n-1}(\bar{x}_{n-1})\,z_{n-1} - k_n z_n,
\ k_n > 0,
\end{equation}
then the derivative of the Lyapunov function candidate given in \eqref{eq:Vndot_1} using the relation $v = \varrho + \eta_n$ becomes
\begin{equation}\label{eq:Vndot}
\dot{V}_n = -\sum_{j=1}^{n} k_j z_j^2 + z_n \varrho .
\end{equation}

\textit{Step $n{+}1$:} We synthesize the commanded input $\buc$ so that $\varrho$ also converges to zero, while the joint admissibility condition is preserved. Differentiating \eqref{eq:rho} and noticing that $\bg_n$ is itself state-dependent results in $\dot{\varrho} = \dot{\bg}_n^{\top}\bu + \bg_n^{\top}\dot{\bu} - \dot{\eta}_n$, which, on substituting the AJ-APIR dynamics \eqref{eq:AJ_APIR}, becomes
\begin{equation}\label{eq:rhodot_1}
\dot{\varrho} = \dot{\bg}_n^{\top}\bu
+ \bg_n^{\top}\bG(\bu)\,\buc - p_2 v - \dot{\eta}_n ,
\end{equation}
where $\dot{\bg}_n = \bigl[\partial \bg_n/\partial \bar{x}_n\bigr] \dot{\bar{x}}_n$ is computable from the model and the measured state. Choose the final Lyapunov function candidate $V$ as
\begin{equation} \label{eq:Lyap}
V \triangleq \frac{1}{2}\sum_{i=1}^{n} z_i^2 + \frac{1}{2}\varrho^2.
\end{equation}
The time derivative of \eqref{eq:Lyap} can be obtained using \eqref{eq:Vndot} and \eqref{eq:rhodot_1} as
\begin{equation}\label{eq:Vdot_1}
\dot{V} = -\sum_{j=1}^{n} k_j z_j^2 + z_n\varrho
+ \varrho\bigl[\dot{\bg}_n^{\top}\bu + \bg_n^{\top}\bG\buc
- p_2 v - \dot{\eta}_n\bigr].
\end{equation}
We define a scalar commanded effective input $u_c^*$ as $u_c^* := \bg_n^{\top}(\bar{x}_n)\bG(\bu)\buc$ and choose it as
\begin{equation} \label{eq:uceff}
u_c^* \triangleq
\dot{\eta}_n + p_2\,\eta_n - \dot{\bg}_n^{\top}\bu
- k_\varrho \varrho - z_n,
\ k_\varrho > 0.
\end{equation}
Note that $u_c^*$ should be realized through the relation
\begin{equation}\label{eq:ucrel}
\bg_n^{\top}(\bar{x}_n)\,\bG(\bu)\,\buc = u_c^*.
\end{equation}

Since $\bu \in \mathbb{R}^m$ with $m > 1$, the scalar relation \eqref{eq:ucrel} does not determine $\buc$ uniquely, and the residual freedom is resolved by selecting the minimum-energy solution in the metric induced by $\bG$. Substituting $\bq = \bG^{1/2}\buc$ converts $\min \|\bG^{1/2}\buc\|^2$ subject to \eqref{eq:ucrel} into the classical minimum-norm problem $\min\|\bq\|^2$ subject to $(\bG^{1/2}\bg_n)^{\top}\bq = u_c^*$, whose solution \cite{johansen2013control} $\bq^\star = \bG^{1/2}\bg_n u_c^*/(\bg_n^{\top}\bG\bg_n)$ is, upon multiplication by $\bG^{-1/2}$,
\begin{equation}\label{eq:uc}
\buc \triangleq \frac{\bg_n(\bar{x}_n)} {\bg_n^{\top}(\bar{x}_n)\,\bG(\bu)\,\bg_n(\bar{x}_n)}u_c^*.
\end{equation}
Using \eqref{eq:uceff}--\eqref{eq:uc}, the expression in \eqref{eq:Vdot_1} reduces to
\begin{equation}\label{eq:Vdot}
\dot{V} = -\sum_{j=1}^{n} k_j z_j^2 - (k_\varrho + p_2)\,\varrho^2\leq -2\lambda V,
\end{equation}
with $\lambda \triangleq \min\{k_1,\ldots,k_n,\,k_\varrho + p_2\}$, which is negative definite on every domain on which the commanded-input map is well defined.

\begin{remark} \label{rem:uc_structure}
The commanded law \eqref{eq:uceff}--\eqref{eq:uc} is the multi-input counterpart of the single-input construction of \cite{kumar2026admissibility}. For $m = 1$ the projector $\bP$ is void, $\bG$ reduces to a scalar value, the term $\dot{\bg}_n^{\top}\bu$ is absorbed into the term $\dot{\eta}_n$, and \eqref{eq:uc} reduces to the control law given in \cite{kumar2026admissibility}. Thus, the present design generalizes its single-input to a multi-input framework. It is also worth noting that the extension from the single-input to the multi-input setting introduces a substantive change in the realization structure. For $m>1$, the scalar gain in the single-input case is replaced by the quadratic form $\bg_n^{\top}\bG\bg_n$. Consequently, it is this quadratic form, rather than $\bG$ itself, that determines the conditioning of the realization channel and must therefore remain bounded away from zero.
\end{remark}
\begin{lemma}\label{lem:cond}
Let \Cref{asm:model,asm:phi} hold, and $\mathcal{K}_{\xi,0}$ and $\underline{D}_{\xi,0}$ be as in \Cref{lem:FI}. Then, for every $\bu \in \mathcal{K}_{\xi,0}$,
\begin{equation}
\label{eq:denombound}
\bg_n^{\top}\bG(\bu)\,\bg_n
\;\geq\;
\underline{D}_{\xi,0} \|\bg_n\|^2
\;\geq\;
\underline{D}_{\xi,0}\, g_{\min}^2 \;>\; 0.
\end{equation}
\end{lemma}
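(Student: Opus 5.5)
The plan is to use the spectral structure of $\bG(\bu)$ from \Cref{rem:spectral} together with the uniform positivity bound \eqref{eq:Dxi0} established in \Cref{lem:FI}. Fix $\bu \in \mathcal{K}_{\xi,0}$ and any $\bar{x}_n$, and abbreviate $\bg_n = \bg_n(\bar{x}_n)$. First I would decompose $\bg_n$ orthogonally along $\bn(\bu)$ and its complement, $\bg_n = (\bn^{\top}\bg_n)\bn + \bP\bg_n$. Since $\bP$ is a symmetric idempotent with $\bP\bn = \mathbf{0}$, and $\bn\bn^{\top}$ is the complementary projector, \eqref{eq:G_aniso} gives
\begin{equation*}
\bg_n^{\top}\bG(\bu)\bg_n = G_\perp(\bu)\,(\bn^{\top}\bg_n)^2 + G_\parallel(\bu)\,\|\bP\bg_n\|^2 .
\end{equation*}
Both coefficients are at least $\min\{G_\perp(\bu),G_\parallel(\bu)\}$, and by \eqref{eq:Dxi0} this is at least $\underline{D}_{\xi,0}$. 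The Pythagorean identity $(\bn^{\top}\bg_n)^2 + \|\bP\bg_n\|^2 = \|\bg_n\|^2$ then yields the first inequality in \eqref{eq:denombound}. Equivalently, this is the Rayleigh-quotient bound $\bg_n^{\top}\bG\bg_n \geq \lambda_{\min}(\bG)\|\bg_n\|^2$, where $\lambda_{\min}(\bG) = \min\{G_\perp,G_\parallel\}$.

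For the second inequality I would invoke \Cref{asm:model}. For the $1\times m$ row $\bg_n^{\top}$, the only singular value is $\|\bg_n\|$, so $\sigma_{\min}(\bg_n^{\top}) \geq g_{\min}$ means exactly $\|\bg_n\| \geq g_{\min}$. Squaring this and multiplying by $\underline{D}_{\xi,0} > 0$ gives the middle inequality. The final strict positivity follows because both $\underline{D}_{\xi,0}$ and $g_{\min}$ are positive.

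The one delicate point is the definition of $\bn$ at $\bu = \mathbf{0}$, or wherever $\nabla\varphi$ could vanish inside $\mathcal{K}_{\xi,0}$. There I would note that $\bG$ is not affected by the choice of unit vector when it is evaluated on the formula above, since any unit $\bn$ gives the same two-projector decomposition. The bound therefore holds for whatever unit vector is used, consistent with the Carathéodory convention in the proof of \Cref{lem:FI}. Apart from this, the argument is routine, and the main work, the positivity of $\underline{D}_{\xi,0}$, is already contained in \Cref{lem:FI}.
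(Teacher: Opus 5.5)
Your proof is correct and follows the paper's argument. The paper likewise reads off from \eqref{eq:G_aniso} that $\bG \succeq \min\{G_\perp,G_\parallel\}\bI_m$, applies \eqref{eq:Dxi0} on $\mathcal{K}_{\xi,0}$, and finishes with $\|\bg_n\| \geq g_{\min}$ from \Cref{asm:model}; your projector decomposition is that Rayleigh bound written out, and your side remark at points where $\nabla\varphi$ vanishes is harmless (strictly, $\bG$ does depend on the chosen unit $\bn$ there unless $G_\perp = G_\parallel$, but you correctly conclude that the bound holds for every such choice).
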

	
\begin{proof}
From \eqref{eq:G_aniso}, $\bG$ is symmetric with eigenvalue $G_\perp$ along $\bn(\bu)$ and eigenvalue $G_\parallel$ on the tangent complement, whence $\bG \succeq \min\{G_\perp,G_\parallel\}\bI_m$ and therefore $\bg_n^{\top}\bG\bg_n \geq \min\{G_\perp,G_\parallel\}\|\bg_n\|^2$. On $\mathcal{K}_{\xi,0}$ one has $\min\{G_\perp,G_\parallel\} \geq \underline{D}_{\xi,0}$ by \eqref{eq:Dxi0}, which is the first inequality. The last inequality follows from $\|\bg_n\| \geq g_{\min}$ (\Cref{asm:model}).
\end{proof}

Let $\mathcal{N}(\mathbf{z},\varrho,\bu,t) \triangleq \|\bg_n\|\,|u_c^*|$ denote the command numerator evaluated along \eqref{eq:uceff}, and, for $c > 0$, let $\Omega_c \triangleq \{(\mathbf{z},\varrho) : V \leq c\}$ be the sublevel set of \eqref{eq:Lyap}.

\begin{definition}[AJ-APC compatibility]\label{def:compat}
The level $c > 0$ is \emph{AJ-APC compatible} if the zero-error realization $\eta_n^*(t) \triangleq \eta_n|_{\mathbf{z}=\mathbf{0},\varrho=0}$ is realizable by an input remaining in a compact subset of $\mathrm{int}(\Ujoint)$, and there exists $\xi > 0$ such that
\begin{equation}\label{eq:self_consistency}
Q_c(\xi) \triangleq\sup_{\substack{(\mathbf{z},\varrho) \in \Omega_c,\;\bu \in \mathcal{K}_{\xi,0},\; t \geq 0}}\frac{\mathcal{N}(\mathbf{z},\varrho,\bu,t)}{\bg_n^{\top}\bG(\bu)\bg_n}\;\leq\; \xi .
\end{equation}
\end{definition}
The ratio in \eqref{eq:self_consistency} is well defined by \Cref{lem:cond}, which also yields a more conservative but often simpler sufficient test. Let $M_c \triangleq \sup\{\mathcal{N}(\mathbf{z},\varrho,\bu,t) : (\mathbf{z},\varrho) \in \Omega_c,\ \bu \in \Ujoint,\ t \geq 0\}$ denote the supremum of the numerator. Since $\bg_n^{\top}\bG\bg_n \geq \underline{D}_{\xi,0}g_{\min}^2$ on $\mathcal{K}_{\xi,0}$, one has $Q_c(\xi) \leq M_c/(\underline{D}_{\xi,0}g_{\min}^2)$, so that
\begin{equation}\label{eq:explicit_compat}
    M_c \;\leq\; \xi\,\underline{D}_{\xi,0}\,g_{\min}^2
\end{equation}
which is sufficient for \eqref{eq:self_consistency}. Since both sides of \eqref{eq:explicit_compat} are computable offline, $M_c(\xi)$ being finite by \Cref{lem:recursive_boundedness}, so a prescribed level $c$ can be certified directly. Condition \eqref{eq:self_consistency} is the joint-constraint counterpart of the APC compatibility condition of \cite{kumar2026admissibility} and plays the same role, namely that of coupling the admissible initial-error region and the reference demand to the finite actuator authority and to the conditioning of the realization channel.

Two properties are needed before \Cref{def:compat} can be used. The first is that the numerator of \eqref{eq:self_consistency} is finite on every sublevel, so that $Q_c(\xi)$ is well defined. The second is that the condition is met by some sublevel, so that the class of compatible levels is nonempty. These are established next.

\begin{lemma}[Bounded recursive quantities on compatible sublevels]
\label{lem:recursive_boundedness}
Let \Cref{asm:reference,asm:model,asm:phi,asm:init} hold. Then all stabilizing functions $\eta_1,\ldots,\eta_{n-1}$, the desired effective input $\eta_n$, the derivatives required by the recursion \eqref{eq:eta1}--\eqref{eq:uceff}, and the command numerator $\mathcal{N}$ are uniformly bounded on $\Omega_c \times \Ujoint \times [0,\infty)$ for every $c > 0$.
\end{lemma}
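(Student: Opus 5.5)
The plan is an induction along the backstepping recursion \eqref{eq:eta1}--\eqref{eq:uceff}, carried out on the set $\Omega_c\times\Ujoint\times[0,\infty)$. Everything rests on one reconstruction step. Given $(\mathbf{z},\varrho)\in\Omega_c$, we have $|z_i|\le\sqrt{2c}$ and $|\varrho|\le\sqrt{2c}$. By \Cref{asm:reference}, $y_d$ and its derivatives up to order $n+1$ are bounded. Hence $x_1=z_1+y_d$ lies in a compact interval. Inductively, $x_{i+1}=z_{i+1}+\eta_i(\bar x_i,z_1,\dots,z_i,y_d,\dots,y_d^{(i)})$, so each $x_{i+1}$ is a continuous function of variables that range over a compact set. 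Therefore $\bar x_n$ ranges over a compact set $\mathcal{X}_c$ that depends only on $c$ and the reference bounds, uniformly in $t$. Since $\Ujoint\subset\Ubox$ is bounded, $\bu$ also ranges over the compact closure $\overline{\Ujoint}$.

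\emph{Step 1: the stabilizing functions.} Take the stabilizing functions in order.
\begin{itemize}[label={}]
\item The function $\eta_1$ is a continuous function of $(x_1,\dot y_d,z_1)$, because $g_1\neq0$ and $f_1,g_1\in C^1$ by \Cref{asm:model}.
\item Each $\eta_i$ from \eqref{eq:virtual} is a polynomial/rational combination of $f_j,g_j$, the $z_j$, reference derivatives, and $\dot\eta_{i-1}$.
\item By the chain rule, $\dot\eta_{i-1}=\sum_j \frac{\partial\eta_{i-1}}{\partial x_j}\bigl(f_j+g_jx_{j+1}\bigr)+\sum_\ell\frac{\partial\eta_{i-1}}{\partial y_d^{(\ell)}}y_d^{(\ell+1)}$. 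This is again a continuous function of $\bar x_i$ and $y_d,\dots,y_d^{(i)}$.
\end{itemize}
Continuous functions on a compact set are bounded, so each $\eta_i$, $i\le n-1$, and each required $\dot\eta_i$ is bounded uniformly in $t$. The same reasoning applies to $\eta_n$ in \eqref{eq:alphan}.

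\emph{Step 2: the final-stage derivatives.} The hard part is $\dot\eta_n$ and $\dot{\bg}_n^\top\bu$, which enter $u_c^*$ in \eqref{eq:uceff}. Here $\dot x_n=f_n+\bg_n^\top\bu$ involves the physical input $\bu$. This is exactly why the domain includes $\Ujoint$: $\bu$ is bounded there, so $\dot x_n$ is bounded on $\mathcal{X}_c\times\Ujoint$.

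\emph{Main obstacle: smoothness.} The chain rule needs $\eta_n$ to be $C^1$, which requires higher differentiability of $f_i$ and $g_i$ than \Cref{asm:model} states explicitly. Differentiating $\eta_n$ involves up to $n$th-order derivatives of $f_1,g_1$. I would handle this by interpreting the region-of-interest smoothness as sufficient for the recursion, i.e., $f_i,g_i$ of class $C^{n-i+1}$. With that, $\dot\eta_n$ is a continuous function of $(\bar x_n,\bu,y_d,\dots,y_d^{(n+1)})$, and $\dot{\bg}_n=\frac{\partial\bg_n}{\partial\bar x_n}\dot{\bar x}_n$ is continuous as well.

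\emph{Conclusion.} With these in hand, $u_c^*=\dot\eta_n+p_2\eta_n-\dot\bg_n^\top\bu-k_\varrho\varrho-z_n$ is a sum of bounded terms. Hence $\mathcal{N}=\|\bg_n\|\,|u_c^*|$ is bounded by continuity of $\bg_n$ on $\mathcal{X}_c$. All bounds are uniform in $t$, since time enters only through the bounded reference derivatives. A uniform bound on the compact parameter set is therefore inherited for every $t\ge0$.

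\Cref{asm:init} is not needed for boundedness itself. It only guarantees that $\eta_n^*$ is a meaningful reference quantity on these sets.
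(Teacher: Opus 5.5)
Your proposal is correct and follows the paper's proof essentially step for step. Both reconstruct $\bar x_n$ from $(\mathbf{z},t)$ via $x_1=z_1+y_d$ and $x_{i+1}=z_{i+1}+\eta_i$, which gives a compact state set uniform in $t$; both bound each $\eta_i$ and each required derivative by continuity (using $g_i\neq0$); and both then use the boundedness of $\bu\in\Ujoint\subset\Ubox$ to control $\dot x_n=f_n+\bg_n^{\top}\bu$, $\dot{\bg}_n^{\top}\bu$, $u_c^*$ and $\mathcal{N}$. Your smoothness caveat is well taken: the paper attributes boundedness of the higher-order derivatives of the $\eta_i$ to mere continuous differentiability of $f_i,g_i$, whereas continuity of $\dot\eta_n$ genuinely requires something like $f_i,g_i\in C^{n-i+1}$, so your explicit strengthening is what actually makes the argument go through.
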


\begin{proof}
We establish the boundedness recursively over the backstepping steps. On $\Omega_c$, the error vector is bounded. By \Cref{asm:reference}, the reference trajectory and its derivatives up to order $n+1$ are also uniformly bounded. Hence, $x_1=z_1+y_d$ evolves in a compact set.

It follows from \eqref{eq:eta1} that $\eta_1$ is bounded on this set. Moreover, the derivatives of $\eta_1$ required at the subsequent backstepping steps are bounded because $f_1$ and $g_1$ are continuously differentiable and $g_1\neq 0$ by \Cref{asm:model}. Since $g_1$ is continuous and nonzero on the resulting compact set, there exists a constant $c_1>0$ such that $\lvert g_1\rvert\ge c_1$ on that set.

Now suppose that the claim holds through Step $i-1$. Then $x_i=z_i+\eta_{i-1}$ is bounded and therefore evolves in a compact set. By the continuous differentiability of $f_i$ and $g_i$, together with $g_i\neq0$ from \Cref{asm:model}, there exists $c_i>0$ such that $\lvert g_i\rvert\ge c_i$ on this compact set. Consequently, $\eta_i$ in \eqref{eq:virtual} is bounded, as are all derivatives required by the subsequent recursion. Continuing this induction through the final step establishes the boundedness of $\eta_n$ in \eqref{eq:alphan} and of $\dot{\eta}_n$.

It remains to bound the two terms that are specific to the multi-input setting. Since $\bu \in \Ujoint$ and $\Ujoint$ is bounded by \Cref{asm:phi}, its closure being compact, $\bu$ is bounded, and $\bg_n$ is bounded on the compact state set by  \Cref{asm:model}. Therefore, $v = \bg_n^{\top}\bu$ is bounded and, with it, $\dot{x}_n = f_n + v$ and the remaining components $\dot{x}_i = f_i + g_i x_{i+1}$ of $\dot{\bar{x}}_n$. Continuous differentiability of $\bg_n$ then bounds $\dot{\bg}_n = [\partial\bg_n/\partial\bar{x}_n]\dot{\bar{x}}_n$ and therefore the drift term $\dot{\bg}_n^{\top}\bu$ in \eqref{eq:uceff}. Boundedness of $u_c^*$, and hence of $\mathcal{N} = \|\bg_n\||u_c^*|$, follows.
\end{proof}

\begin{lemma}[Local nonvacuity of AJ-APC compatibility]
\label{prop:compatibility_nonvacuity}
Let $\mathcal{K}_{\xi_0,0}$ be the compact set of \Cref{lem:FI} associated with a command bound $\xi_0 > 0$ and the initial input $\bu(0)$, and suppose that
\begin{equation} \label{eq:zero_error_compatibility}
    \sup_{t \geq 0,\; \bu \in \mathcal{K}_{\xi_0,0}}\;
    \frac{\mathcal{N}(\mathbf{0},0,\bu,t)}
    {\bg_n^{\top}\bG(\bu)\,\bg_n}
    \;<\; \xi_0 .
\end{equation}
Then there exists $c^\star > 0$ such that every sublevel $\Omega_c$ with $0 < c \leq c^\star$ is AJ-APC compatible, with command bound $\xi = \xi_0$.
\end{lemma}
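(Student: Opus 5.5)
The plan is a continuity and compactness argument in the error variables. The set $\mathcal{K}_{\xi_0,0}$ is fixed: it depends only on $\xi_0$ and $\bu(0)$, not on $c$. Write
\[
F(\mathbf{z},\varrho,\bu,t) \triangleq \frac{\mathcal{N}(\mathbf{z},\varrho,\bu,t)}{\bg_n^{\top}\bG(\bu)\bg_n}.
\]
By \Cref{lem:cond}, the denominator is bounded below by $\underline{D}_{\xi_0,0}g_{\min}^2>0$ on $\mathcal{K}_{\xi_0,0}$. Also, $\bG$ is continuous in $\bu$ there, with the same Carath\'eodory caveat at $\bu=\mathbf{0}$ as in \Cref{lem:FI}: $\bG$ is bounded near $\mathbf{0}$, and only the lower bound on the quadratic form is used. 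Define the gap $\delta \triangleq \xi_0 - \sup_{t,\bu}F(\mathbf{0},0,\bu,t) > 0$, which is positive by \eqref{eq:zero_error_compatibility}. It then suffices to show that $|F(\mathbf{z},\varrho,\bu,t)-F(\mathbf{0},0,\bu,t)| \le \delta$ uniformly in $(\bu,t)\in\mathcal{K}_{\xi_0,0}\times[0,\infty)$ whenever $(\mathbf{z},\varrho)\in\Omega_c$ with $c$ small. Since the denominator does not depend on $(\mathbf{z},\varrho)$ and is bounded below, this reduces to a uniform bound $|\mathcal{N}(\mathbf{z},\varrho,\bu,t)-\mathcal{N}(\mathbf{0},0,\bu,t)| \le \delta\,\underline{D}_{\xi_0,0}g_{\min}^2$.

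The key step is a \emph{uniform} (in $t$) Lipschitz-type estimate for $\mathcal{N}$ in $(\mathbf{z},\varrho)$ on a fixed sublevel, say $\Omega_1$. I would carry it out by redoing the induction of \Cref{lem:recursive_boundedness} with derivatives in place of values.
\begin{enumerate}[label=(\roman*)]
\item Express $\bar{x}_n$ as a function of $(\mathbf{z}, y_d,\dots,y_d^{(n-1)})$ via $x_1=z_1+y_d$ and $x_i=z_i+\eta_{i-1}$.
\item Express $\eta_n,\dot\eta_n,\dot\bg_n$ as functions of $(\mathbf{z},\varrho,\bu,\bar{y}_d)$, where $\bar{y}_d\triangleq(y_d,\dots,y_d^{(n+1)})$, using $v=\varrho+\eta_n$ where needed.
\item Conclude that $u_c^*$ in \eqref{eq:uceff}, and hence $\mathcal{N}=\|\bg_n\||u_c^*|$, is a continuous function $\Psi(\mathbf{z},\varrho,\bu,\bar{y}_d)$.
\end{enumerate}
By \Cref{asm:reference}, $\bar{y}_d(t)$ ranges in a bounded set $Y$, and we pass to its closure. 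On $\Omega_1$ the error vector lies in a compact set, and $\bu$ lies in the compact set $\mathcal{K}_{\xi_0,0}$. The function $\Psi$ is then uniformly continuous on the compact product $\overline{\Omega_1}\times\mathcal{K}_{\xi_0,0}\times\overline{Y}$. Because the $g_i$ stay bounded away from zero on the relevant compact state sets (as argued in \Cref{lem:recursive_boundedness}), the divisions in \eqref{eq:eta1}, \eqref{eq:virtual} cause no trouble. Uniform continuity yields $r>0$ such that $\|(\mathbf{z},\varrho)\|\le r$ implies the required bound on $|\mathcal{N}(\mathbf{z},\varrho,\bu,t)-\mathcal{N}(\mathbf{0},0,\bu,t)|$ for all $\bu$ and $t$. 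Then set $c^\star \triangleq \min\{1, r^2/2\}$, so that $\Omega_c\subset\{\|(\mathbf{z},\varrho)\|\le r\}$ for $c\le c^\star$.

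With $c^\star$ chosen this way, $Q_c(\xi_0)\le \xi_0-\delta+\delta=\xi_0$ for every $0<c\le c^\star$, which is \eqref{eq:self_consistency} with $\xi=\xi_0$. The realizability part of \Cref{def:compat} (that $\eta_n^*$ is realized by an input in a compact subset of $\mathrm{int}(\Ujoint)$) I would take from the separation condition in \Cref{asm:init}, since $\eta_n^*$ is exactly the effective input $\bg_n^\top\bu^*$ for perfect tracking.

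The main obstacle is the uniformity in $t$. Continuity of $\mathcal{N}$ at the origin for each fixed $t$ is not enough, because the supremum in \eqref{eq:self_consistency} runs over $t\ge0$. I handle this by lifting the time dependence into the bounded parameter $\bar{y}_d$ and using compactness of its closure; this requires that $\mathcal{N}$ depend on $t$ only through $\bar{y}_d(t)$, which holds for the time-invariant model \eqref{eq:system}.
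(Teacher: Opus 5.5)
Your route is the paper's. You establish continuity of the quotient in $(\mathbf{z},\varrho)$, uniformly over $\mathcal{K}_{\xi_0,0}\times[0,\infty)$, and then shrink $\Omega_c$ until $Q_c(\xi_0)$ stays within the gap $\xi_0-q_0$. Lifting $t$ into the bounded parameter $\bar{y}_d$ and passing to its closure is a real improvement, since the paper only asserts the uniformity in $t$.

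\textbf{Error in your reduction to the numerator.} The denominator is $\bg_n(\bar{x}_n)^{\top}\bG(\bu)\bg_n(\bar{x}_n)$, and $\bar{x}_n$ depends on $\mathbf{z}$ (already $x_1=z_1+y_d$). So the denominator is \emph{not} independent of $(\mathbf{z},\varrho)$, and bounding $|\mathcal{N}-\mathcal{N}_0|$ alone does not control $|F-F_0|$. You also cannot simply apply joint uniform continuity to the whole quotient on $\overline{\Omega_1}\times\mathcal{K}_{\xi_0,0}\times\overline{Y}$. The reason is that $\bG$ is in general discontinuous where $\nabla\varphi=0$, since $\bn$ is undefined there and $G_\perp\neq G_\parallel$. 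In the paper's setting that point is the origin, which lies in $\mathcal{K}_{\xi_0,0}$.

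\textbf{Repair.} The fix uses only $\|\bG\|\le C$ on $\Ujoint$ (from the proof of \Cref{lem:FI}). Let $d,d_0$ and $\bg_n,\bg_{n,0}$ denote the denominators and coupling vectors at the states corresponding to $(\mathbf{z},\bar{y}_d)$ and $(\mathbf{0},\bar{y}_d)$. Symmetry of $\bG$ gives $|d-d_0|=|(\bg_n-\bg_{n,0})^{\top}\bG(\bg_n+\bg_{n,0})|\le C\|\bg_n-\bg_{n,0}\|\,\|\bg_n+\bg_{n,0}\|$, hence
\[
|F-F_0|\le \frac{|\mathcal{N}-\mathcal{N}_0|}{\underline{D}_{\xi_0,0}g_{\min}^2}+\frac{\mathcal{N}_0\,|d-d_0|}{(\underline{D}_{\xi_0,0}g_{\min}^2)^2}.
\]
Both terms are uniformly small by your compactness argument, applied to $\Psi$ and to $(\mathbf{z},\bar{y}_d)\mapsto\bg_n(\bar{x}_n)$, with $\mathcal{N}_0$ bounded by \Cref{lem:recursive_boundedness}. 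The paper's own claim that the quotient inherits the uniform continuity implicitly needs exactly this estimate.

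\textbf{Realizability clause.} The paper's proof never checks the realizability clause of \Cref{def:compat}, so your appeal to \Cref{asm:init} goes beyond it. However, that assumption only separates $\bu^*$ from $\{\varphi=0\}$, not from the faces of $\Ubox$. It therefore does not by itself place $\bu^*$ in a compact subset of $\mathrm{int}(\Ujoint)$.
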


\begin{proof}
By \Cref{lem:FI} the set $\mathcal{K}_{\xi_0,0}$ is a compact subset of $\Ujoint$, and by \Cref{lem:cond} the denominator satisfies $\bg_n^{\top}\bG\bg_n \geq \underline{D}_{\xi_0,0}g_{\min}^2 > 0$ on it. By \Cref{lem:recursive_boundedness} the numerator $\mathcal{N}$ is bounded on $\Omega_c \times \Ujoint \times [0,\infty)$ and, being a composition of continuously differentiable functions of the errors, the input and the reference, is continuous in $(\mathbf{z},\varrho)$ uniformly over $\mathcal{K}_{\xi_0,0}$ and over $t$, the reference and its derivatives being uniformly bounded by \Cref{asm:reference}. The quotient $\mathcal{N}/(\bg_n^{\top}\bG\bg_n)$ inherits both properties.

Write $q_0$ for the supremum in \eqref{eq:zero_error_compatibility}, so that $q_0 < \xi_0$, and note that $Q_c(\xi_0)$ is the supremum of the same quotient over the larger set $\Omega_c \times \mathcal{K}_{\xi_0,0} \times [0,\infty)$. Uniform continuity in $(\mathbf{z},\varrho)$ gives $Q_c(\xi_0) \to q_0$ as $c \to 0$, since $\Omega_c$ shrinks to $(\mathbf{0},0)$. Choosing $c^\star$ so that $Q_{c^\star}(\xi_0) \leq \xi_0$, which is possible because $q_0 < \xi_0$ strictly, and noting that $Q_c$ is non-decreasing in $c$, yields $Q_c(\xi_0) \leq \xi_0$ for every $c \leq c^\star$. This is \eqref{eq:self_consistency} with $\xi = \xi_0$.
\end{proof}

Note that condition \eqref{eq:zero_error_compatibility} requires only that the reference be trackable at zero error within the authority available on $\mathcal{K}_{\xi_0,0}$. It involves neither the closed-loop trajectory nor the sublevel $c$, and remains verifiable offline once $\xi_0$ and $\bu(0)$ are fixed.
\begin{theorem}
\label{thm:main}
Let \Cref{asm:reference,asm:model,asm:phi,asm:init} hold, let the level $c$ be AJ-APC compatible in the sense of \Cref{def:compat} with associated command bound $\xi > 0$, and consider the AJ-APIR \eqref{eq:AJ_APIR}--\eqref{eq:gains} driven by the commanded input \eqref{eq:eta1}--\eqref{eq:uc}. Then, for every initial condition with $V(0) \leq c$ and $\bu(0) \in \Ujoint$, the closed-loop solution exists for all $t \geq 0$ and:
\begin{enumerate}[label=\emph{(\roman*)}]
    \item $\bu(t) \in \mathcal{K}_{\xi,0} \subset \Ujoint$ for all $t \geq 0$;
    \item all closed-loop signals remain uniformly bounded;
    \item the tracking and realization errors converge exponentially,
    \begin{equation}
        \label{eq:expbound}
        \textstyle
        \sum_{i=1}^{n} z_i^2(t) + \varrho^2(t)
        \;\leq\;
        \bigl[\sum_{i=1}^{n} z_i^2(0) + \varrho^2(0)\bigr]\,
        e^{-2\lambda t},
    \end{equation}
    with $\lambda = \min\{k_1,\ldots,k_n,k_\varrho + p_2\}$, so that $y(t) \to y_d(t)$ exponentially;
    \item the commanded input satisfies $\|\buc(t)\| \leq \xi$ for all $t \geq 0$.
\end{enumerate}
\end{theorem}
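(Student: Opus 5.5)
The proof will be a bootstrap (continuation) argument that couples the Lyapunov decay \eqref{eq:Vdot} with the invariance result of \Cref{lem:FI}. The two are mutually dependent. \eqref{eq:Vdot} holds only while the command map \eqref{eq:uc} is well defined. \Cref{lem:FI} in turn needs the a priori bound $\|\buc\|\le\xi$.

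Let $[0,T_{\max})$ be the maximal interval of existence of the closed loop \eqref{eq:aug_system} with \eqref{eq:eta1}--\eqref{eq:uc}. Define
\[
T^\ast \triangleq \sup\bigl\{\tau\in[0,T_{\max}) : V(t)\le c,\ \bu(t)\in\mathcal{K}_{\xi,0},\ \|\buc(t)\|\le\xi \ \text{for all } t\in[0,\tau]\bigr\}.
\]

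First I check that $T^\ast$ is well defined and positive. At $t=0$ we have $V(0)\le c$ and $\bu(0)\in\mathcal{K}_{\xi,0}$. Since $\bg_n^{\top}\bG\bg_n>0$ there by \Cref{lem:cond}, \eqref{eq:uc} gives
\[
\|\buc\| = \frac{\|\bg_n\|\,|u_c^*|}{\bg_n^{\top}\bG\bg_n} = \frac{\mathcal{N}}{\bg_n^{\top}\bG\bg_n}\le Q_c(\xi)\le \xi
\]
by \eqref{eq:self_consistency}.

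Next I work on $[0,T^\ast)$. There the denominator of \eqref{eq:uc} is bounded below by $\underline{D}_{\xi,0}g_{\min}^2$, so the command is well defined and \eqref{eq:ucrel} holds exactly. The computation leading to \eqref{eq:Vdot} is then valid and gives $\dot V\le -2\lambda V$. Hence $V(t)\le V(0)e^{-2\lambda t}$, which is \eqref{eq:expbound} after multiplying by $2$, and in particular $V(t)\le c$. Since $\|\buc\|\le\xi$ on this interval, \Cref{lem:FI} applies to the truncated (piecewise continuous) command and yields $\bu(t)\in\mathcal{K}_{\xi,0}$.

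The main obstacle is closing the bootstrap: I must show that the nonstrict inequalities cannot fail immediately after $T^\ast$, so that $T^\ast=T_{\max}$. I will argue by contradiction, using continuity and strictness.

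\begin{itemize}
\item $V$ is strictly decreasing whenever $V>0$, so $V\le c$ persists.
\item By \Cref{lem:FI} the set $\mathcal{K}_{\xi,0}$ is reached with inward-pointing field. More robustly, I apply \Cref{lem:FI} with a slightly larger bound $\xi'>\xi$; this gives a compact set $\mathcal{K}_{\xi',0}\supset\mathcal{K}_{\xi,0}$ that contains a neighbourhood of the trajectory on $[0,T^\ast+\delta)$.
\item The bound $\|\buc\|\le\xi$ is the delicate one, since \eqref{eq:self_consistency} is nonstrict. Here I invoke it directly: as long as $(\mathbf z,\varrho)\in\Omega_c$ and $\bu\in\mathcal{K}_{\xi,0}$, the quotient is at most $Q_c(\xi)\le\xi$. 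So the command bound is an algebraic consequence of the other two memberships, not an independent constraint. It therefore only remains to show that $\bu$ stays in $\mathcal{K}_{\xi,0}$ and $V$ stays at most $c$.
\item Those two memberships are self-sustaining: the field is inward for $\bu$ (the proof of \Cref{lem:FI} uses only the bound at the current instant), and $V$ is nonincreasing.
\end{itemize}

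This gives $T^\ast=T_{\max}$.

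Finally, I establish completeness and items (ii) and (iv). On $[0,T_{\max})$ the errors stay in $\Omega_c$ and $\bu$ stays in the compact set $\mathcal{K}_{\xi,0}$. By \Cref{lem:recursive_boundedness}, the states $x_i=z_i+\eta_{i-1}$, the stabilizing functions, $u_c^*$ and $\buc$ are all bounded. No finite escape is therefore possible, so $T_{\max}=\infty$, the solution being understood in the Carath\'eodory sense at $\bu=\mathbf 0$ as in \Cref{lem:FI}. This gives (i), (ii) and (iv). Item (iii) is the estimate already derived, and $y-y_d=z_1$ then converges exponentially.
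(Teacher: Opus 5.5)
Your proposal is correct and follows essentially the same continuation argument as the paper. Both check the bound at $t=0$ via \eqref{eq:self_consistency}, then apply \Cref{lem:FI}, \Cref{lem:cond} and \eqref{eq:Vdot} on the maximal interval, re-derive $\|\buc\|\le\xi$ from compatibility to extend that interval, and use \Cref{lem:recursive_boundedness} for completeness and item (ii). Your observation that the command bound follows algebraically from $(\mathbf{z},\varrho)\in\Omega_c$ and $\bu\in\mathcal{K}_{\xi,0}$ is exactly what the paper's brief ``standard continuation argument'' relies on, so you spell out the closing step more carefully; the $\xi'>\xi$ detour is unnecessary and should be dropped, because $\mathcal{K}_{\xi',0}$ need not contain a neighbourhood of $\mathcal{K}_{\xi,0}$ and compatibility is not known at $\xi'$.
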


\begin{proof}
The argument is a continuation on the maximal interval $[0,T)$ over which $\|\buc(t)\| \leq \xi$. This interval is nonempty: $\bu(0) \in \mathcal{K}_{\xi,0}$ by construction of $\mathcal{K}_{\xi,0}$ in \Cref{lem:FI} and $(\mathbf{z}(0),\varrho(0)) \in \Omega_c$ by hypothesis, so \eqref{eq:self_consistency} gives $\|\buc(0)\| \leq Q_c(\xi) \leq \xi$, and $\buc$ is continuous at $t=0$.

On $[0,T)$ the commanded input is bounded, so \Cref{lem:FI} applies and gives $\bu(t) \in \mathcal{K}_{\xi,0} \subset \Ujoint$, which establishes (i). From \Cref{lem:cond}, we have $\bg_n^{\top}\bG\bg_n \geq \underline{D}_{\xi,0} g_{\min}^2 > 0$, so that \eqref{eq:uc} is well defined and \eqref{eq:ucrel} holds exactly. The recursion of backstepping and \eqref{eq:Vdot} gives $\dot{V} \leq -2\lambda V$, hence the comparison lemma yields $V(t) \leq V(0)e^{-2\lambda t} \leq c$. In particular $\Omega_c$ is invariant, and \eqref{eq:expbound} follows on recalling the definition \eqref{eq:Lyap} of $V$, which establishes (iii).

It remains to verify that the assumed command bound is preserved, so that $T \to \infty$. Since $(\mathbf{z}(t),\varrho(t)) \in \Omega_c$ and $\bu(t) \in \mathcal{K}_{\xi,0}$ on $[0,T)$, the compatibility condition \eqref{eq:self_consistency} gives
\begin{equation*}
    \|\buc(t)\| = \frac{\mathcal{N}(\mathbf{z},\varrho,\bu,t)}{\bg_n^{\top}\bG(\bu)\bg_n} \;\leq\; Q_c(\xi) \;\leq\; \xi ,
\end{equation*}
so the bound holds with the same constant at the endpoint of every such interval. A standard continuation argument, therefore, extends the solution to all $t \geq 0$ and yields (iv).

Condition (ii) is then immediate: $V$ is non-increasing, so $\mathbf{z}$ and $\varrho$ remain within the compact sublevel set $\Omega_c$. The states $x_i$ are bounded because $x_i = z_i + \eta_{i-1}$ and each $\eta_{i-1}$ is a continuous function of bounded arguments by \Cref{asm:reference,asm:model}. The input vector $\bu$ is bounded by (i) since $\mathcal{K}_{\xi,0}$ is compact and $\buc$ is bounded by (iv).
\end{proof}

\begin{remark}
\label{rem:exact}
\Cref{thm:main} is a regional result, in the sense that it applies to AJ-APC compatible sublevels, and within that region, the convergence is exponential with a rate that is entirely a design quantity. One may observe from \eqref{eq:Vdot} that $\lambda$ contains $k_\varrho + p_2$ rather than $k_\varrho$ alone, so that the decay of the AJ-APIR filter contributes positively to the realization bandwidth instead of opposing it, the term $p_2\eta_n$ in \eqref{eq:uceff} having already absorbed its steady-state effect. In applications (for instance, when $y_d$ is generated online by an outer guidance loop \cite{kumar2024robust,kumar2024three}) where certifying \eqref{eq:self_consistency} offline is inconvenient, replacing the denominator of \eqref{eq:uc} by $\bg_n^{\top}\bG\bg_n + \epsilon_c$ with $\epsilon_c > 0$ yields the unconditional bound $\|\buc\| \leq \|\bg_n\||u_c^*|/\epsilon_c$ and retains forward invariance, at the price of a realization deficit of order $\epsilon_c$ in \eqref{eq:Vdot} and hence of ultimate boundedness in place of \eqref{eq:expbound}.
\end{remark}

\section{Numerical Simulation}\label{sec:sim}
In this section, we evaluate the performance of the proposed AJ-APIR framework through two numerical studies. The first study compares the proposed anisotropic realization with the isotropic realization in \eqref{eq:isotropic_APIR} under simultaneous individual box and joint capacity constraints. We first consider an arbitrary commanded input $\buc$ for a representative three-input system. For better visualization, animations for our results are available at \href{https://youtu.be/21RK1Gd4fkY}{https://youtu.be/21RK1Gd4fkY}.

For AJ-APIR, parameters are selected as $p_{1,\perp} =p_{1,\parallel}= 3$,  $p_2 = 2$, $\gamma^{(k)} = 2$, $\gamma_\perp = 1$, whereas the gains for the isotropic realization are $p_1^k=3$, $\gamma^{(k)} =p_2= 2$. The individual box constraints are specified as $u_1\in(-1.7,1.8)$, $u_2\in(-2.1,2.3)$, and $u_3\in(-1.6,1.9)$, while the joint capacity constraint is defined by $\varphi(\bu)=u_1^2+2u_2^2+0.5(u_3-0.15)^2-R^2<0$, with $R=1.6$. The commanded input is chosen as $\buc(t)=3.5[\sin(t),\ \cos(0.8t),\ \sin(1.3t+1)]^\top$, with initial condition $\bu_0=[0.5,0.5,0.5]^\top$.

\begin{figure*}[!ht]
    \centering
    \begin{subfigure}{0.45\linewidth}
        \centering
        \includegraphics[width=0.9\linewidth]{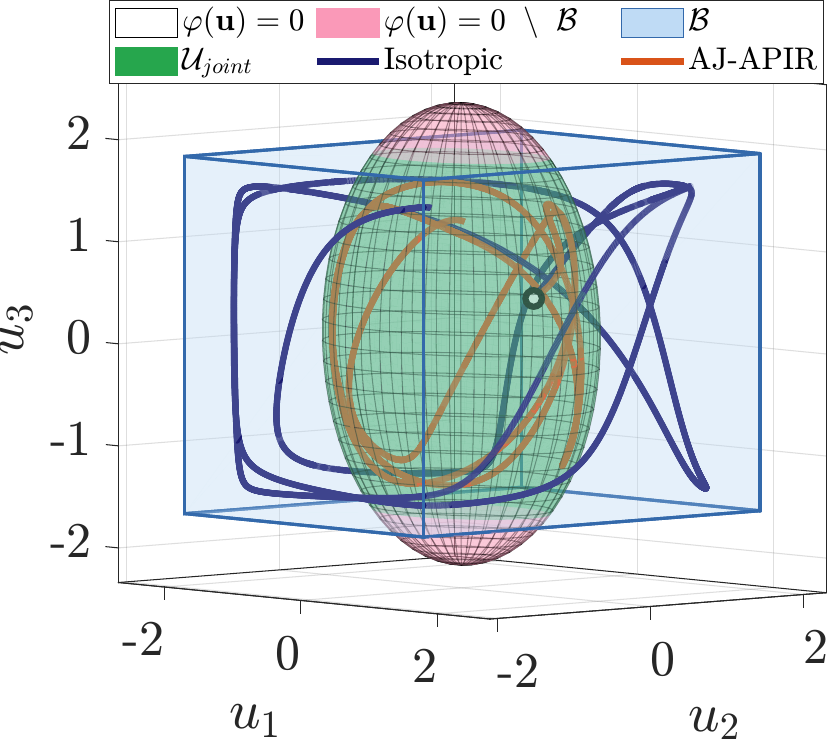}
        \caption{Input space.}
        \label{fig:phase_comp}
    \end{subfigure}%
\begin{subfigure}{0.45\linewidth}
        \centering
        \includegraphics[width=\linewidth]{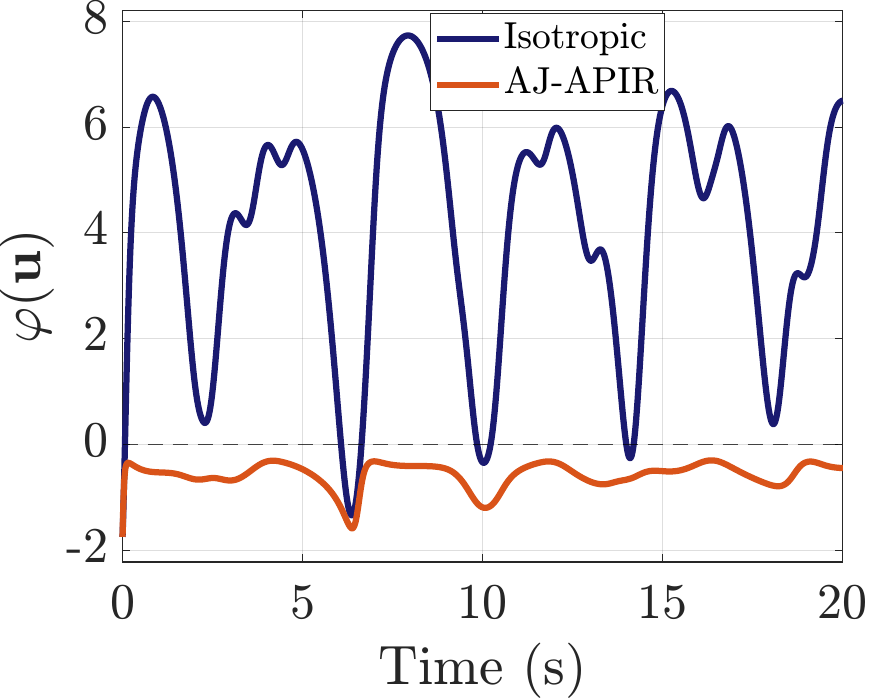}
        \caption{Joint Capacity Constraint \eqref{eq:joint_constraint}.}
        \label{fig:joint_Comp}
    \end{subfigure}   
        \begin{subfigure}{0.45\linewidth}
        \centering
        \includegraphics[width=\linewidth]{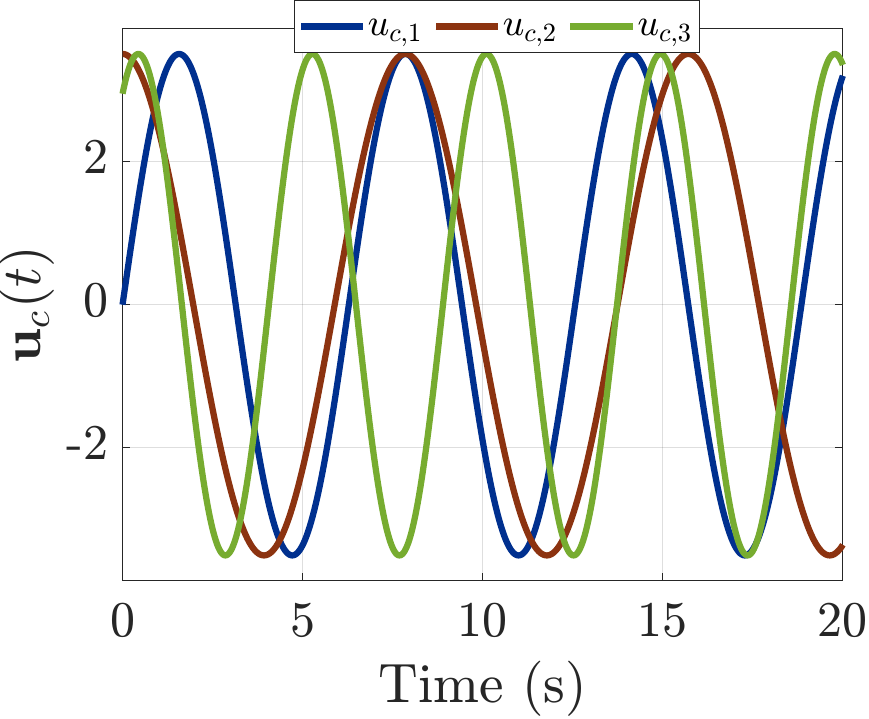}
        \caption{Commanded Input.}
        \label{fig:uc_comp}
    \end{subfigure}%
    \begin{subfigure}{0.45\linewidth}
        \centering
        \includegraphics[width=\linewidth]{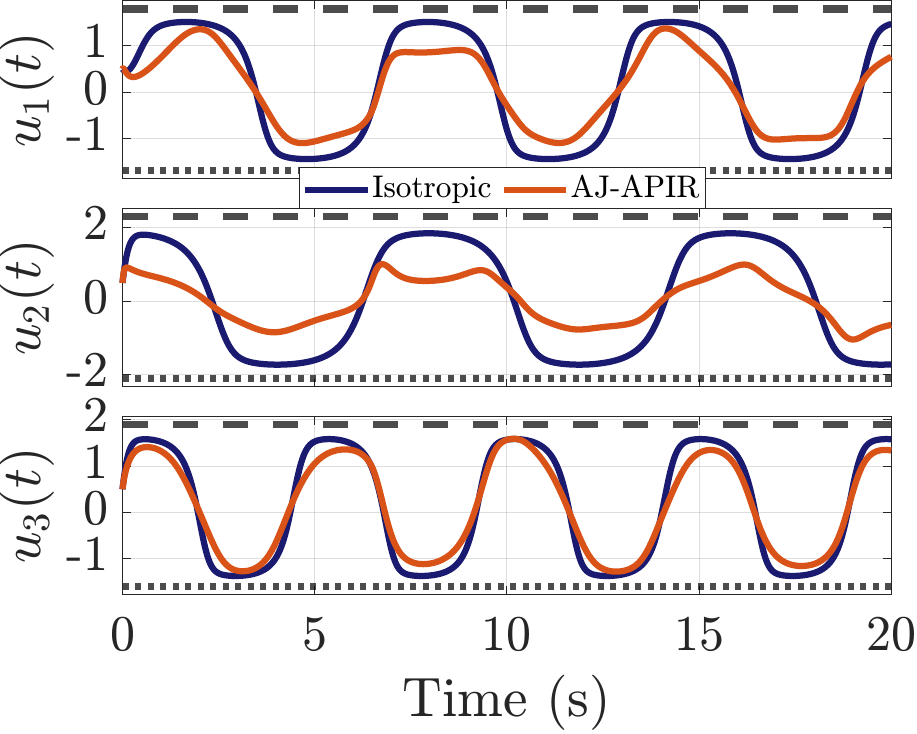}
        \caption{Realized Input.}
        \label{fig:u_comp}
    \end{subfigure}
\caption{Comparison of the proposed AJ-APIR Framework with \eqref{eq:isotropic_APIR} under individual and joint constraints.}
    \label{fig:comp_results}
\end{figure*}
The comparative results are presented in \Cref{fig:comp_results}. The phase portrait in \Cref{fig:phase_comp} illustrates that the isotropic realization trajectory repeatedly leaves the joint admissible set $\mathcal{U}_{\mathrm{joint}}$, whereas the AJ-APIR trajectory remains confined within the admissible region. This is also quantified by the joint constraint function $\varphi(\mathbf{u})$ shown in \Cref{fig:joint_Comp}. One may observe from \Cref{fig:joint_Comp} that the isotropic realization violates the joint constraint, while AJ-APIR maintains $\varphi(\mathbf{u}) <0$ throughout the time. As shown in \Cref{fig:uc_comp}, the commanded input $\mathbf{u}_c(t)$ exceeds the individual box limits. However, \Cref{fig:u_comp} confirms that both realizations keep every input channel within its respective box constraint, that is, $u_{\min,k} \leq u_k \leq u_{\max,k}$ for all time. Thus, satisfying the individual actuator limits alone does not, in general, guarantee satisfaction of the joint admissibility constraint. This, in turn, demonstrates that AJ-APIR enforces the individual and joint constraints simultaneously, whereas the channel-wise isotropic realization does not account for the coupling imposed by the shared capacity envelope.

For the output tracking case, we consider a second-order, two-input ($n{=}2$, $m{=}2$) strict-feedback system, given by
\begin{equation}
  \label{eq:sim_sys}
  \dot{x}_1 = x_2,\,
  \dot{x}_2 = f_2(x_1,x_2) + \mathbf{g}_2^{\top}(x_1,x_2)\bu,
\end{equation}
with $f_2 = -x_1 - \sin(x_2)$, $\mathbf{g}_2 = [1.5+0.3\sin(x_1),\;0.25+0.2\cos(x_2)]^{\top}$, $y = x_1$, and reference $y_d(t) = 4\sin(t)$. The individual box constraints are $u_1\in(-1.7,2.0)$ and $u_2\in(-1.2,1.8)$, while the joint capacity constraint is $\varphi(\bu)=u_1^2+u_2^2-R^2<0$, with $R=1.5$.

The AJ-APIR parameters are selected as $p_{1,\perp} =p_{1,\parallel}= 3$,  $p_2 = 2$, $\gamma^{(k)} = 2$, $\gamma_\perp = 1$, with backstepping gains $k_1 = k_2 = k_\rho= 2$. The initial conditions are chosen as $\mathbf{x}(0) = [1.2, -0.8]^\top$ and $\bu(0) = [0.4, -0.3]^\top$.
 \begin{figure*}[!ht]
    \centering
    \begin{subfigure}{0.45\linewidth}
        \centering
        \includegraphics[width=\linewidth]{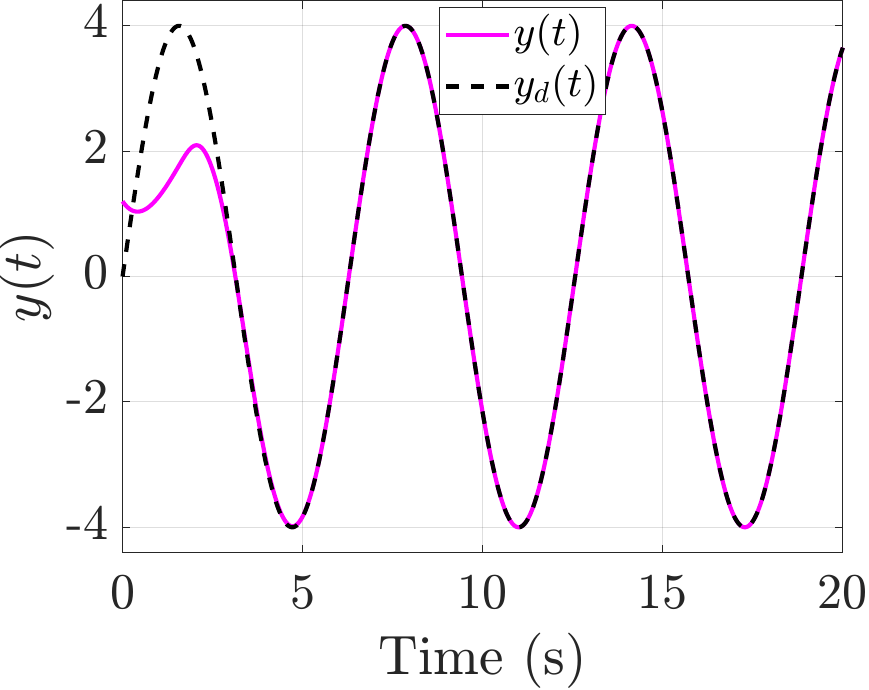}
        \caption{Trajectory.}
        \label{fig:traj1}
    \end{subfigure}%
        \begin{subfigure}{0.45\linewidth}
        \centering
        \includegraphics[width=\linewidth]{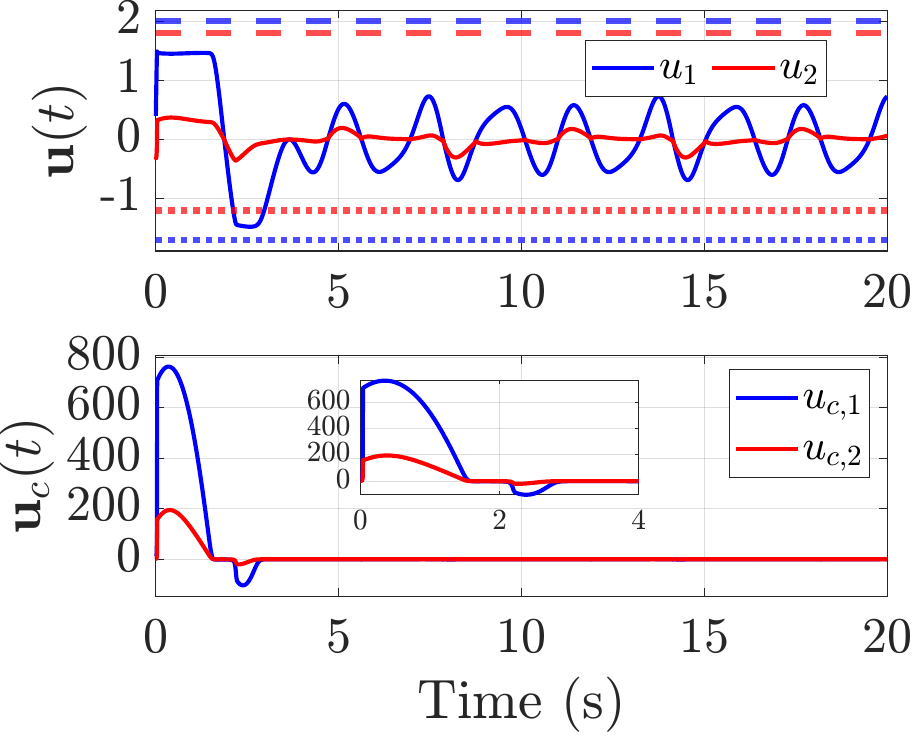}
        \caption{Realized and commanded input.}
        \label{fig:u1}
    \end{subfigure}
    \begin{subfigure}{0.45\linewidth}
        \centering
        \includegraphics[width=\linewidth]{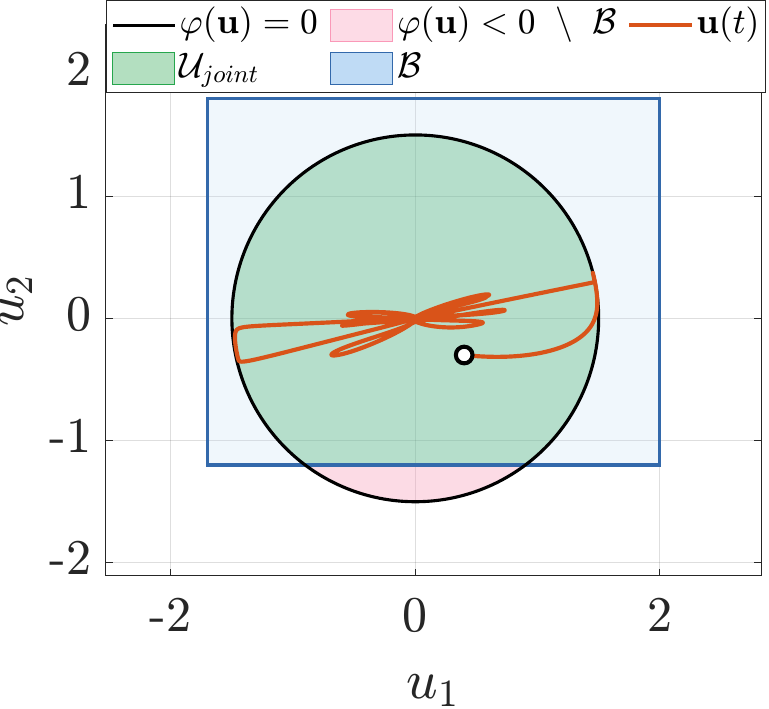}
        \caption{Input space.}
        \label{fig:phase1}
    \end{subfigure}%
           \begin{subfigure}{0.45\linewidth}
        \centering
        \includegraphics[width=\linewidth]{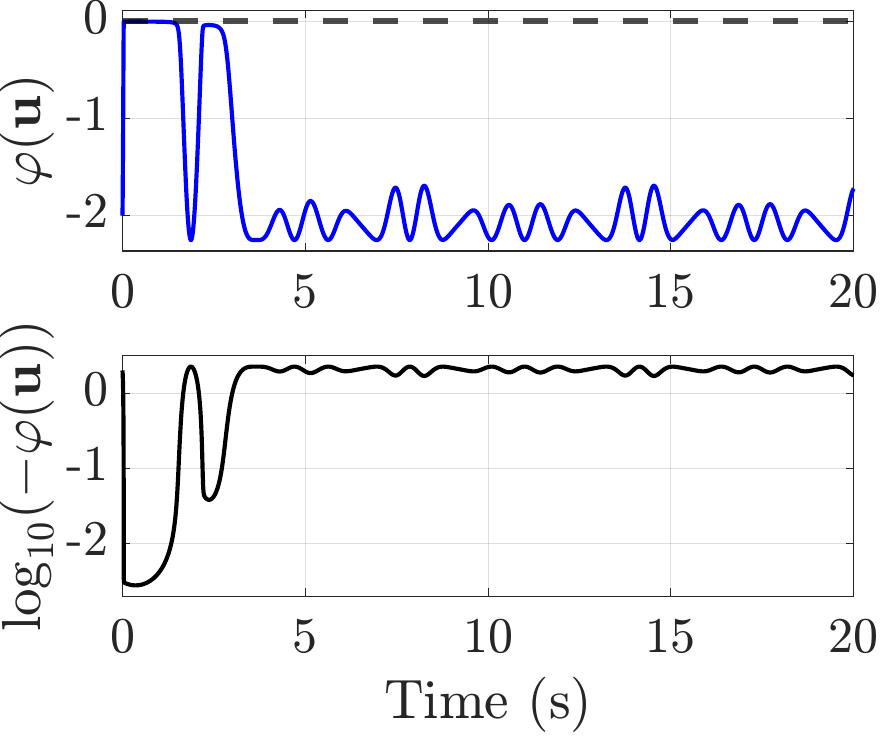}
        \caption{Joint capacity constraint \eqref{eq:joint_constraint}.}
        \label{fig:joint1}
    \end{subfigure}
\caption{Output tracking under the AJ-APIR framework}
    \label{fig:case1_results}
\end{figure*}
The corresponding results are presented in \Cref{fig:case1_results}. As shown in \Cref{fig:traj1}, the system's output $y(t)$ converges to the reference trajectory $y_d(t)$ after a brief transient of approximately \tb{$3 \rm{s}$}. During the initial transient phase, the system requires high control effort to nullify the tracking errors, resulting in a higher commanded input $\mathbf{u}_c(t)$, as evident from \Cref{fig:u1}. However, once the error is nullified, the control input demand decreases significantly. Note that despite this aggressive initial demand, the realized control inputs $u_1(t)$ and $u_2(t)$ strictly respect their individual asymmetric bounds at all times, as indicated by the dotted lines.
The input phase portrait in \Cref{fig:phase1} confirms that the control trajectory remains entirely confined within the joint admissible set $\mathcal{U}_{\mathrm{joint}}$. The joint constraint evaluation ${\varphi}(\mathbf{u})$, shown in \Cref{fig:joint1}, closely approaches the critical boundary ${\varphi}(\mathbf{u}) = 0$ during peak demand but remains strictly non-positive throughout, as further verified by the $\log_{10}(-\varphi(\mathbf{u}(t)))$ plot, confirming that no joint constraint violation occurs.

We further demonstrate the proposed framework on the three-dimensional
path-following guidance problem considered in \cite{doi:10.2514/1.G008792}. The path-following problem is formulated as the pursuit of a virtual target moving along the desired path, with the relative range, elevation lead angle, and azimuth lead angle regulated to zero through the vehicle's linear speed and pitch and yaw angular rates. In \cite{doi:10.2514/1.G008792}, the guidance inputs are constrained independently using isotropic realization, with each input restricted to its prescribed interval. However, such channel-wise constraints do not capture the common maneuvering authority shared by the pitch and yaw rate channels. To account for this coupling, we impose a joint constraint on the angular-rate commands and implement the resulting guidance inputs using AJ-APIR. The accompanying video  \href{https://youtu.be/21RK1Gd4fkY}{https://youtu.be/21RK1Gd4fkY} demonstrates the closed-loop response, where the vehicle converges to and follows the desired path while the realized angular rates satisfy both their individual bounds and the joint capacity constraint throughout the maneuver.
\section{Conclusions}\label{sec:conclusion}
This work developed an anisotropic admissibility-preserving control framework for multi-input strict-feedback nonlinear systems subject to a joint capacity constraint. The AJ-APIR uses the geometry of the joint constraint boundary to separate normal and tangential control directions. The normal component is attenuated as the boundary is approached, while the tangential component remains active, which prevents violation of joint constraints without unnecessarily restricting admissible control redistribution. The proposed AJ-APIR was then integrated with a recursive backstepping design, which demonstrates forward invariance of the joint admissible set and derives an explicit compact invariant set on which the realization channel remains uniformly nondegenerate. Under the proposed AJ-APC compatibility condition, the closed-loop system achieves exponential convergence of the tracking errors, while all closed-loop signals and the commanded input remain uniformly bounded.  The numerical simulations show that satisfying individual actuator bounds does not ensure joint admissibility, whereas AJ-APIR preserves the shared capacity constraint during tracking.
\bibliographystyle{IEEEtran}
\bibliography{Ref}

@article{kumar2026admissibility, title={Admissibility-Preserving Control for Strict-Feedback Nonlinear Systems with Asymmetric Actuator Constraints}, author={Kumar, Saurabh and Kumar, Shashi Ranjan and Sinha, Abhinav}, journal={arXiv preprint arXiv:2608.15375}, year={2026}}

@article{johansen2013control, title={Control allocation—A survey}, author={Johansen, Tor A and Fossen, Thor I}, journal={Automatica}, volume={49}, number={5}, pages={1087--1103}, year={2013}, publisher={Elsevier}}

@article{teel1999anti, title={Anti-windup for exponentially unstable linear systems}, author={Teel, Andrew R}, journal={International Journal of Robust and Nonlinear Control: IFAC-Affiliated Journal}, volume={9}, number={10}, pages={701--716}, year={1999}, publisher={Wiley Online Library}}

@article{zaccarian2002common, title={A common framework for anti-windup, bumpless transfer and reliable designs}, author={Zaccarian, Luca and Teel, Andrew R}, journal={Automatica}, volume={38}, number={10}, pages={1735--1744}, year={2002}, publisher={Elsevier}}

@article{grimm2003antiwindup, title={Antiwindup for stable linear systems with input saturation: an LMI-based synthesis}, author={Grimm, Gene and Hatfield, Jay and Postlethwaite, Ian and Teel, Andrew R and Turner, Matthew C and Zaccarian, Luca}, journal={IEEE Transactions on Automatic Control}, volume={48}, number={9}, pages={1509--1525}, year={2003}, publisher={IEEE}}

@article{wen2011robust, title={Robust adaptive control of uncertain nonlinear systems in the presence of input saturation and external disturbance}, author={Wen, Changyun and Zhou, Jing and Liu, Zhitao and Su, Hongye}, journal={IEEE Transactions on Automatic Control}, volume={56}, number={7}, pages={1672--1678}, year={2011}, publisher={IEEE}}

@article{tee2009barrier, title={Barrier Lyapunov functions for the control of output-constrained nonlinear systems}, author={Tee, Keng Peng and Ge, Shuzhi Sam and Tay, Eng Hock}, journal={Automatica}, volume={45}, number={4}, pages={918--927}, year={2009}, publisher={Elsevier}}

@inproceedings{ngo2005integrator, title={Integrator backstepping using barrier functions for systems with multiple state constraints}, author={Ngo, Khoi B and Mahony, Robert and Jiang, Zhong-Ping}, booktitle={Proceedings of the 44th IEEE Conference on Decision and Control}, pages={8306--8312}, year={2005}, organization={IEEE}}

@article{liu2016barrier, title={Barrier Lyapunov functions-based adaptive control for a class of nonlinear pure-feedback systems with full state constraints}, author={Liu, Yan-Jun and Tong, Shaocheng}, journal={Automatica}, volume={64}, pages={70--75}, year={2016}, publisher={Elsevier}}

@article{ames2016control, title={Control barrier function based quadratic programs for safety critical systems}, author={Ames, Aaron D and Xu, Xiangru and Grizzle, Jessy W and Tabuada, Paulo}, journal={IEEE Transactions on Automatic Control}, volume={62}, number={8}, pages={3861--3876}, year={2016}, publisher={IEEE}}

@inproceedings{ames2019control, title={Control barrier functions: Theory and applications}, author={Ames, Aaron D and Coogan, Samuel and Egerstedt, Magnus and Notomista, Gennaro and Sreenath, Koushil and Tabuada, Paulo}, booktitle={2019 18th European Control Conference (ECC)}, pages={3420--3431}, year={2019}, organization={IEEE}}

@inproceedings{rauscher2016constrained, title={Constrained robot control using control barrier functions}, author={Rauscher, Manuel and Kimmel, Melanie and Hirche, Sandra}, booktitle={2016 IEEE/RSJ International Conference on Intelligent Robots and Systems (IROS)}, pages={279--285}, year={2016}, organization={IEEE}}

@article{durham1993constrained, title={Constrained control allocation}, author={Durham, Wayne C}, journal={Journal of Guidance, Control, and Dynamics}, volume={16}, number={4}, pages={717--725}, year={1993}}

@article{bodson2002evaluation, title={Evaluation of optimization methods for control allocation}, author={Bodson, Marc}, journal={Journal of Guidance, Control, and Dynamics}, volume={25}, number={4}, pages={703--711}, year={2002}}

@article{sinha2026networked, title={Networked Admissibility-Preserving Control for Directed Safe Coordination}, author={Sinha, Abhinav and Gopikannan, Lohitvel and Kumar, Shashi Ranjan}, journal={arXiv preprint arXiv:2609.09384}, year={2026}}

@article{petersen2006constrained, title={Constrained quadratic programming techniques for control allocation}, author={Petersen, John AM and Bodson, Marc}, journal={IEEE Transactions on Control Systems Technology}, volume={14}, number={1}, pages={91--98}, year={2006}, publisher={IEEE}}

@article{zaccarian2009dynamic, title={Dynamic allocation for input redundant control systems}, author={Zaccarian, Luca}, journal={Automatica}, volume={45}, number={6}, pages={1431--1438}, year={2009}, publisher={Elsevier}}

@article{harkegaard2005resolving, title={Resolving actuator redundancy—optimal control vs. control allocation}, author={H{\"a}rkeg{\aa}rd, Ola and Glad, S Torkel}, journal={Automatica}, volume={41}, number={1}, pages={137--144}, year={2005}, publisher={Elsevier}}

@article{garone2017reference, title={Reference and command governors for systems with constraints: A survey on theory and applications}, author={Garone, Emanuele and Di Cairano, Stefano and Kolmanovsky, Ilya}, journal={Automatica}, volume={75}, pages={306--328}, year={2017}, publisher={Elsevier}}

@book{rawlings2020model, title={Model predictive control: theory, computation, and design}, author={Rawlings, James B and Mayne, David Q and Diehl, Moritz M}, publisher={Nob Hill Publishing}, year={2020}}

@book{krstic1995nonlinear, author = {Krsti\'{c}, Miroslav and Kanellakopoulos, Ioannis and Kokotovi\'{c}, Petar V.}, title = {Nonlinear and Adaptive Control Design}, publisher = {Wiley}, address = {New York}, year = {1995}}

@article{chen2011adaptive, author = {Chen, Mou and Ge, Shuzhi Sam and Ren, Beibei}, title = {Adaptive tracking control of uncertain {MIMO} nonlinear systems with input constraints}, journal = {Automatica}, volume = {47}, number = {3}, pages = {452--465}, year = {2011}}

@incollection{boyd2004convex_chapter,author= {Boyd, Stephen and Vandenberghe, Lieven}, title= {Convex Sets}, booktitle = {Convex Optimization}, chapter = {2}, section = {2.5.2}, pages = {50--51},publisher    = {Cambridge University Press}, year = {2004}, address = {Cambridge, UK}}

@article{kumar2024robust, title={Robust path-following guidance for an autonomous vehicle in the presence of wind}, author={Kumar, Saurabh and Sinha, Abhinav and Kumar, Shashi Ranjan}, journal={Aerospace Science and Technology}, volume={150}, pages={109225}, year={2024}, publisher={Elsevier}}

@article{kumar2024three, title={Three-dimensional path-following nonlinear guidance for unmanned aerial vehicles}, author={Kumar, Saurabh and Kumar, Shashi Ranjan and Sinha, Abhinav}, journal={Journal of Guidance, Control, and Dynamics}, volume={47}, number={6}, pages={1231--1240}, year={2024}, publisher={American Institute of Aeronautics and Astronautics}}

@article{doi:10.2514/1.G008792, author = {Kumar, Saurabh and Kumar, Shashi Ranjan and Sinha, Abhinav}, title = {Three-Dimensional Nonlinear Path-Following Guidance with Bounded Input Constraints}, journal = {Journal of Guidance, Control, and Dynamics}, volume = {49}, number = {1}, pages = {198-215}, year = {2026}, doi = {10.2514/1.G008792}}

\end{document}